\newtheorem{theorem}{Theorem}
\newtheorem{definition}{Definition}
\newtheorem{lemma}{Lemma}
\newtheorem{example}{Example}
\newtheorem{conjecture}{Conjecture}
\newtheorem{remark}{Remark}
\newtheorem{problem}{Problem}
\newtheorem{assumption}{Assumption}
\newcommand{\beq}{\begin{equation}}
\newcommand{\eeq}{\end{equation}}
\newcommand{\barr}{\left[\begin{array}}
\newcommand{\earr}{\end{array}\right]}
\newcommand{\rank}{\mbox{rank}\,}
\newcommand{\bpf}{\begin{proof}}
\newcommand{\epf}{\end{proof}}
\newcommand{\ftwo}{\ensuremath{\mathbb{F}_{2}}}
\newcommand{\ff}{\ensuremath{\mathbb{F}}}
\newcommand{\PCId}{\mbox{PCI}(d)}
\begin{document}
\title
{Polynomial Complexity of Inversion of sequences and Local Inversion of Maps} 
\author{Virendra Sule\\Professor (Retired)\\Department of Electrical Engineering\\
Indian Institute of Technology Bombay\\Mumbai 400076, India\\vrs@ee.iitb.ac.in}
\maketitle

\begin{abstract}
    This Paper defines and explores solution to the problem of \emph{Inversion of a finite Sequence} over the binary field, that of finding a prefix element of the sequence which confirms with a \emph{Recurrence Relation} (RR) rule defined by a polynomial and satisfied by the sequence. The minimum number of variables (order) in a polynomial of a fixed degree defining RRs is termed as the \emph{Polynomial Complexity} of the sequence at that degree, while the minimum number of variables of such polynomials at a fixed degree which also result in a unique prefix to the sequence and maximum rank of the matrix of evaluation of its monomials, is called \emph{Polynomial Complexity of Inversion} at the chosen degree. Solutions of this problems discovers solutions to the problem of \emph{Local Inversion} of a map $F:\ftwo^n\rightarrow\ftwo^n$ at a point $y$ in $\ftwo^n$, that of solving for $x$ in $\ftwo^n$ from the equation $y=F(x)$. Local inversion of maps has important applications which provide value to this theory. In previous work it was shown that minimal order \emph{Linear Recurrence Relations} (LRR) satisfied by the sequence known as the \emph{Linear Complexity} (LC) of the sequence, gives a unique solution to the inversion when the sequence is a part of a periodic sequence. This paper explores extension of this theory for solving the inversion problem by considering \emph{Non-linear Recurrence Relations} defined by a polynomials of a fixed degree $>1$ and satisfied by the sequence. The minimal order of polynomials satisfied by a sequence is well known as non-linear complexity (defining a Feedback Shift Register of smallest order which determines the sequences by RRs) and called as \emph{Maximal Order Complexity} (MOC) of the sequence. However unlike the LC there is no unique polynomial recurrence relation at any degree. This problem of inversion using polynomial recurrence relations is the main proposal of this paper (and its predecessors using LC) and has missed attention of researchers in the past. This paper proposes the conjecture that the complexity of inversion using polynomials of degree higher than $1$ in average cases of sequences is of polynomial order in the length of the sequence. 
\end{abstract}

\noindent
\emph{Subject Classification}: cs.CR, cs.DM, cs.CC\\
\emph{ACM Classification}: E.3; G.2.0; G.2.3\\
\emph{Keywords}: Sequences over finite fields, Inversion of a sequence, Linear Complexity, Maximal Order Complexity, Local Inversion of maps, Polynomial Recurrence Relations.
\section{Introduction}
This papers defines and addresses the problem of inversion of a sequence, that of finding a prefix to a sequence which conforms with the rule satisfied by the sequence. Consider a sequence $S(M)=\{s_0,s_1,\ldots,s_{(M-1)}\}$ of length $M$ where elements $s_i$ belong to the Cartesian space $\ftwo^n$. Then \emph{inversion} of $S(M)$ is the problem of finding a prefix $s_{(-1)}$ (also called as \emph{inverse}) such that the extended sequence $\{s_{(-1)},s_0,s_1,\ldots,s_{(M-1)}\}$ satisfies the same rule as that satisfied by $S(M)$.
Hence to define the problem of inversion it is necessary to specify the class of rules to be considered for satisfaction by the sequence and which provide unique solution to the inverse for each of these rules. If $S(M)$ is considered a periodic sequence of period $M$, rule of \emph{Recurrence Relations} (RRs) defined by linear recurrence relation $s_{(M+k)}=s_k$ can be defined for $S(M)$ for which the unique inverse $s_{-1}=s_{(M-2)}$ exists. Given a scalar sequence $S(M)$, a polynomial $f(X_0,X_1,\ldots,X_{(m-1)})$ (in $m$-variables and of a fixed degree $d\leq m$) with co-efficients in $\ftwo$ defines the RRs
\beq\label{RR}
s_{(m+j)}=f(s_j,s_{(j+1)},\ldots,s_{(j+m-1)})
\mbox{ for } j=0,1,2,\ldots,M-m-1
\eeq
If $S(M)$ satisfies these relations then $f$ is said to be an \emph{associated} polynomial of $S(M)$. For $n>1$, a polynomial $f$ is said to be an associated polynomial for the vector sequence $S(M)=S(i)(M)$ if $f$ is associated polynomial for each of the co-ordinate sequences $S(i)(M)$. 

Let the set of all polynomials over $\ftwo$ in $m$-variables and degree $d$ with zero constant term be denoted by $P(m,d)$. All polynomials in $P(m,d)$ can thus be represented as a linear combination of monomials
\[
f=\sum_{1\leq i_1\leq i_2\leq\ldots i_r\leq m,1\leq r\leq d}a(i_1,i_2,\ldots,i_r)\mu(i_1,i_2,\ldots,i_r)
\]
with co-efficients $a(i_1,i_2,\ldots,i_r)$ in $\ff$ and monomials $\mu(i_1,i_2,\ldots,i_r)=X_{i_1}X_{i_2}\ldots X_{i_r}$ where product is over all distinct choices of $r$-variables for $r=1$ to $r=d$. If $f$ is such a polynomial expression in $m$ variables of degree $d$, the RRs (\ref{RR}) for $S$ define a linear system of equations with co-efficients of $f$ as unknowns. Let these equations be denoted as
\beq\label{RRincoeff}
s_{m+j}=f(s_j,s_{(j+1)},\ldots,s_{(j+m-1)},\bar{a}(i_1,i_2,\ldots,i_r))
\eeq
where $\bar{a}(.)$ denote the tuple of unknown co-efficients of a possible associated polynomial $f$.
Hence solutions of these co-efficients from equations (\ref{RRincoeff}) determine all associated polynomials $f$ of $S$ in $P(m,d)$. Given a vector sequence $S(M)=\{S(i)(M)\}$, the set of all polynomials $f$ which are simultaneous solutions to RRs (\ref{RRincoeff}) for each of the co-ordinate sequences, are the rules under which we shall seek to define the inverse of $S(M)$. We call the number of variables $m$ in an associated polynomial the \emph{order of the recurrence relations} and $d$ the degree. 

Hence for a fixed degree $d$, all orders $m$ of associated polynomials of $S$ can be computed from the solutions of (\ref{RRincoeff}). For $d=1$ the smallest order of a RR is known as the \emph{Linear Complexity} (LC) of the sequence. It can be shown using the well known theory of minimal polynomials of sequences that associated linear polynomial at degree $d=1$ of order equal to LC exist uniquely for periodic sequences. Hence for a fixed $d>1$, the smallest order $m$ at which there exists an associated polynomial of $S$ is a non-linear generalisation of LC. Without fixing the degree of the polynomial defining the RRs, the smallest order of RRs at which an associated polynomial exists, has been known as \emph{Maximal Order Complexity} (MOC). Since we have restricted this paper to the binary field $\ftwo$ we shall only consider the set $P(m,d)$ for $1\leq d\leq m$ and all monomials of polynomial expression in $P(m,d)$ shall be product terms of variables since the monomials will be evaluated only over the binary field.

We can now define the notion of inverse for a vector sequence. First we consider a vector prefix $S_{(-1)}$ to be the inverse for a vector sequence $S(M)$, if each co-ordinate $S_{(-1)}(i)$ is an inverse of the corresponding co-ordinate sequence $S(i)$ with a common associated polynomial $f$ across all $S(i)$.

\begin{definition}[Inverse of a sequence] Given a scalar sequence $S(M)$ for $n=1$ over $\ftwo^n$, if there exist a unique prefix element $s_{(-1)}$ in $\ftwo$ and an associated polynomial $f$ in $P(m,d)$ (with co-efficient vector $\bar{a}$ satisfying (\ref{RRincoeff}) and has a unique solution to $s_{(-1)}$ from the relation 
\beq\label{InverseRelationwithcoeff}
s_{(m-1)}=f(s_{(-1)},s_0,s_1,\ldots,s_{(m-2)},\bar{a})  
\eeq
then $s_{(-1)}$ is called the \emph{inverse} of $S(M)$ with $f$ as the associated polynomial. The inverse of a vector sequence $S(M)$ over $\ff^n$ for $n>1$, is the unique vector $S_{(-1)}$ such that each co-ordinate $S_{(-1)}(i)$ is the unique inverse for each co-ordinate sequence $S(i)$ with the condition that there exists an associated polynomial $f$ in $P(m,d)$ which satisfies the RRs (\ref{RRincoeff}) and (\ref{InverseRelationwithcoeff}) for each co-ordinate sequence $S(i)$ simultaneously. 
\end{definition}

\begin{remark}\label{Condonfforinverse}
    Uniqueness of the inverse of $S$ for an associated polynomial $f$ is determined by the relation (\ref{InverseRelationwithcoeff}) in which the co-effcients $\bar{a}$ of the associated polynomial $f$ appear. Hence this condition of unique solvability of the inverse in (\ref{InverseRelationwithcoeff}) is an additional constraint on the parameters $\bar{a}$ of the associated polynomial to have a unique inverse apart from the constraints of RRs.
\end{remark}

Formally we can now define the problem of inversion of a vector sequence $S(M)$.

\begin{problem}[Sequence Inversion Problem]
    Given a vector sequence $S(M)$ of length $M$ determine whether or not a unique inverse $s_{(-1)}$ exists for some associated polynomial $f$ in $P(m,d)$, develop computational framework to determine existence and computation of the inverse, the smallest number of variables $m$ and degree $d$ of associated polynomials which define the RRs and satisfy the condition of unique solution $s_{(-1)}$ to (\ref{InverseRelationwithcoeff}). Define the notion of complexity of inversion of a sequence and explore methods of computation. 
\end{problem}

\subsection{Complexity of Inversion}
The set of all associated polynomials $f$ of a sequence $S$ of chosen order $m$ and degree $d$ can be computed from the RRs (\ref{RRincoeff}). Further the subset of such $f$ for which a unique inverse exists can be computed using the additional equation (\ref{InverseRelationwithcoeff}). Hence these two systems of equations jointly are necessary and sufficient conditions on parameters of $f$ (with chosen $m$ and $d$) under which a unique inverse exists for each of the associated $f$. Hence an appropriate notion for the \emph{Polynomial Complexity of Inversion} can be defined in terms of these systems of equations. 

The well known notion of LC of a sequence $S$ is the smallest number $m$ of variables in a linear associated polynomial for linear RRs (\ref{RR}). It is also known to be the degree of the minimal polynomial of $S$. The co-efficient matrix of the RRs (\ref{RRincoeff}) is known as the Hankel matrix defined by the sequence. When the unique inverse of the sequence is also determined by the linear recurrence, the co-effcient $a_0$ of the first variable $X_0$ of the linear associated polynomial satisfies $a_0\neq 0$. Further, the minimality of $m$ (which is equivalent to the uniqueness of the minimal polynomial or the co-efficient vector of the linear associated polynomial) implies that the rank of the Hankel matrix of RRs is maximal and is equal to $m$ itself. Hence the LC is also an appropriate notion of complexity of inversion. LC is $m$ equal to both, the maximal rank of the Hankel matrix as well as the minimal order of linear RRs which captures all the linearly independent RRs. This notion of complexity of inversion by LC, is the motivation for defining a generalisation termed as polynomial complexity of inversion at degree $d$. However when the degree $d$ of associated polynomials is $>1$ the uniqueness of the associated polynomials is not guaranteed since the rank of the Hankel matrix of RRs $r(m)>m$. Hence the following definition of complexity appropriately captures the higher degree situation. 

\begin{definition}[Polynomial Complexity of Inversion]\label{DefPCofI}
The \emph{Polynomial Complexity of Inversion at degree $d$} of a given vector sequence $S(M)$ of $n$-tuples over $\ftwo$, is the smallest number $m$ of variables $1\leq d\leq m$ such that 
\begin{enumerate}
    \item There exists a solution to a polynomial $f$ in $P(m,d)$ satisfying the equations of RRs (\ref{RRincoeff}) for each of the co-ordinate sequence $S(i)$.
    \item \emph{Maximal Rank condition}. The rank of the matrix of RRs (\ref{RRincoeff}) as linear equations in co-effcients $\bar{a}$ of $f$, of all co-ordinate sequences taken together is maximal with respect to $m$ at the sequence length $M$. 
    \item There is a solution $f$ of RRs of all co-ordinate sequences $S(i)$ which has unique solution $s_{(-1)}(i)$ to the the inversion relation (\ref{InverseRelationwithcoeff}) for each of the co-ordinate sequences $S(i)$.
\end{enumerate} 
We denote this complexity as \PCId.
\end{definition}

\begin{remark}\emph{
    The notion of $\PCId$ is a non-linear generalisation of the LC which is equivalent to complexity of inversion at $d=1$. The non-linear complexity $\PCId$ is thus bounded above by LC}.
\end{remark}

\begin{remark}[Maximal rank condition]. The maximal rank condition in the definition is clarified by following notation. Let the system of linear equations of RRs (\ref{RRincoeff}) in the co-efficient vector $\bar{a}$ of $f$ for a scalar co-ordinate sequence $S(i)$ of a vector sequence $S(M)$ be denoted as 
\beq\label{LineqofRRs}
\bar{s}_i=H_i(m,M)\bar{a}
\eeq
where $\bar{s}_i$ is the vector of sequence elements 
\[
\bar{s}_i=[S(i)^T_m,S(i)^T_{(m+1)},\ldots,S(i)^T_{(M-1)}]^T
\]
for the $i$-th co-ordinate sequence $S(i)$ while $H_i(m,M)$ is the matrix of co-efficients of the linear system (\ref{RRincoeff}) for $S(i)$. Let $\hat{H}(m,M)$ denote the matrix 
\[
\hat{H}(m,M)=[H_1(m,M)^T,\ldots,H_n(m,M)^T]^T
\]
Then the second condition of maximal rank is that
\[
\rank \hat{H}(m,M)=\mbox{\rm max}_r \hat{H}(r,M)
\].
\end{remark}

\begin{remark}\emph{
If the two equations (\ref{RR}) and (\ref{InverseRelationwithcoeff}) are taken together for a scalar sequence $S$, there are two unknowns, the inverse $s_{(-1)}$ and the vector of co-efficients $\bar{a}(.)$ of the polynomial $f$. Hence PCI at degree $d$ is the smallest number $m$ such that the matrix of the system of equations (\ref{RRincoeff}) for co-effcients of $f$ has maximal rank, their solutions exist and such that there exists a unique solution to (\ref{InverseRelationwithcoeff}) for $s_{(-1)}$ for each of the associated polynomials. However due to the field being binary, $s_{(-1)}$ appears affine in $f$. The co-efficients of $f$ are determined by RR constraints (\ref{RRincoeff}), while the condition of unique solvability of $s_{(-1)}$ from (\ref{InverseRelationwithcoeff}) for $f$ can be guaranteed by constraining the multiplier of $s_{(-1)}$ in equations (\ref{InverseRelationwithcoeff}) to $1$ at the evaluation of $f$ at $S$. For inversion of a vector sequence the matrix of the combined system of equations (\ref{RRincoeff}) for all co-ordinate sequences required to have maximal rank and unique solvability of (\ref{InverseRelationwithcoeff}) for all co-ordinates $S_{(-1)}(i)$ of the inverse vector is required to hold for each of the co-ordinate sequences $S(i)$. }
\end{remark}

The complexity of inversion $\PCId$ is qualified at a fixed degree $d$ of an associated polynomial. While in comparison to MOC, where there is no restriction on the degree of the polynomial defining the RRs, it follows that 
\[
\mbox{MOC}\leq\PCId
\]
at any degree $d$. For $d=1$ one can specifically consider the LC and the value of the smallest order of a linear RR which determines a unique inverse. We can legitimately call this as LC of Inversion (LCI). Since LC equals the degree of the minimal polynomial of the sequence, it follows from uniqueness of the minimal polynomial that for $d=1$ the associated polynomial $f$ is unique and corresponds to a unique inverse when the sequence is periodic. It can be observed considering periodicity of the sequence that the corresponding LC of inversion (LCI) is equal to LC for periodic sequences and we have the bound
\[
\PCId\leq\mbox{LC}=\mbox{LCI}
\]

\subsection{Relation with local inversion of maps}
Problem of inversion of a sequence encompasses another important problem, that of local inversion of a map in finite fields. Let $F:\ff^n\rightarrow\ff^n$ be a map (mapping each point $x$ in $\ff^n$ to a unique point $y=F(x)$) in the image of $F$. Given $y$, a \emph{local inverse} of $F$ at given $y$ is a point $x$ in $\ff^n$ such that $y=F(x)$. In practical situations where such maps arise, a polynomial or logical representation of $F$ is not given or available or practically convenient to work with, although it always exists. Hence the action of the map $F$ is called as a Black Box operation since an algorithm or a computer program is available to compute $y$ given $x$. Solving for $x$ may be possible in many ways, for instance by a brute force search of $x$ in $\ff^n$ such that $F(x)$ equals a given $y$. Such a search is of exponential order in $n$. This can be carried out systematically by the Time Memory Tradeoff (TMTO) in square root number of operations of the brute force search. Another way to solve for a local inverse is to formulate algebraic equations for the equality $y=F(x)$ from the description of the internal logical components of $F$ and solving for the unknown $x$ by algebraic elimination of variables. This is also a known hard problem of computation. Formulation of such algebraic equations involves a large number of latent variables hence solving such equations has never been feasible. Using the notion of RRs is another attractive way to solve the local inversion problem as a problem of inversion of sequences. Consider the iterative sequence
\beq\label{IterativeSeq}
V(F,y,M)=\{y,F(y),F^{(2)}(y),\ldots,F^{(M-1)}(y)\}
\eeq
If $V_{(-1)}$ is an inverse of this sequence then $V_{(-1)}$ is a local inverse iff $F(V_{(-1)})=y$. Hence when $(F,y)$ generate a periodic sequence of iterations it also satisfies RRs of order $N$ equal to the period and has a unique polynomial $f$ defining a RR of order $m$ (of degree $1$) equal to the degree of the minimal polynomial and gives a unique inverse of the map at $y$ which is also the sequence inverse. Hence this is precisely the inverse computed by sequence inversion. The complexity $\PCId$ for the degree $d=1$, is thus the Linear Complexity (LC) of inversion of the sequence and the map. This observation has an important implication to Cryptanalysis problems since all such problems are local inversion problems of maps. The map $F$ associated with a cryptographic algorithm can be modeled using the polynomial $f$ which defines the RRs for the sequence $V(F,y,M)$. It follows that the complexity of local inversion of $F$ at $y$ can be considered as PCI of the sequence $V(F,y,M)$ at a prescribed degree and searching over increasing degree. The discovery of RRs of a sequence obtained through the Black Box operations of $F$ also provides interpolating conditions to model the internal mechanism of the map $F$. Hence such a model plays an important role of simplifying the algebraic model of the algorithm in Cryptanalysis problems. Formulation of the local inversion problem of a map $F$ as a sequence inversion problem generalises the notion of LC of inversion of the map to non-linear complexity of inversion. Such a step has the advantage that for $d>1$, $\PCId$ is expected to be much smaller than LC and close enough to MOC which is known to be of $O(\log M)$ order on the average. Hence when the average case complexity $\PCId$ is small enough, the key recovery in cryptanalysis is expected to be feasible on the average. 

\subsection{Sequences modeled by non-singular Feedback Shift Registers}
A feedback shift register (FSR) defined by a polynomial feedback function $f(X_0,X_1,\ldots,X_{(n-1)})$ defines a map $\Phi_f:\ff^n\rightarrow\ff^n$
\[
\Phi_f(x_0,x_1,\ldots,x_{(n-1)})\rightarrow (x_1,x_2,\ldots,f(x_0,x_1,\ldots,x_{(n-1)}))
\]
of register length $n$. Thus every polynomial in the RRs (\ref{RR}) of a sequence $S(M)$ of order $m$ represents an FSR of length $m$. An FSR is said to be \emph{non-singular} if the map $\Phi_f$ is invertible. A non-singular FSR representation of $S(M)$ if it exists, thus solves the inversion problem for any initial sequence $(x_0,x_1,\ldots,x_{(m-1)})$. However this condition is too strong for solving the sequence inversion problem because all that is needed is to find an inverse of a given sequence, not inverses of all sequences of same register length $m$. Let the smallest order of a non-singular FSR represenation at degree $d$ of the polynomial of a sequence be called \emph{non-singular FSR complexity} (NFSRC) at degree $d$. Let this complexity be denoted as $\mbox{NFSRC}(d)$. Then it follows that at degree $d$, 
\[
\PCId\leq\mbox{NFSRC}(d)
\]
To determine the non-singular FSR of degree $d$, the polynomial $f$ must satisfy the conditions (\ref{RR}) and the condition for non-singularity of the map $\Phi_f$. Such a condition, known as Golomb's condition, is well known \cite{Golomb} and is practically convenient, but works unfortunately only for the binary field. However Golomb's condition for non-singularity of the FSR does not involve the degree of the polynomial $f$. Hence we may consider the smallest order of the non-singular FSR irrespective of the degree as the \emph{Non-Singular FSR Complexity} (NFSRC). Solving for a non-singular FSR representation of a sequence automatically solves the inversion problem. Hence there is no need to consider complexity of inversion separately. Then we have
\[
\PCId\leq\mbox{NFSR}\leq\mbox{NFSRC}(d)
\]

\subsection{Previous work}
Sequences satisfying linear recurrence relations have been studied since a long time \cite{LempelZiv, Rueppel}. Often a nonlinear recurrence relation is available for a process with an initial condition and it is required to solve for the sequence, a problem called solution problem of recurrence equations. On the other hand discovering recurrence relations from a given initial sequence is an inverse problem. Past work exists over a long time on both of these problems \cite{Massey, LiuML, LeylaWinterhof, Nied,Nied21}. Nonlinear recurrence relations were defined for characterising the complexity of sequences by \cite{Jansen, Jansen91, JansenBoe}. Surprisingly the problem of inversion of sequences appears to be invincible in previous literature on sequences. The well known Golomb's condition for non-singularity of an FSR \cite{Golomb} is one of the rare instance of reference to inversion of an FSR map. However this condition is much stronger than required to find the inverse of an initial condition sequence of an FSR which generates the given sequence as an output sequence. This is because, an FSR feedback polynomial satisfying this condition gives the inverse of the whole initial state of the register for arbitrary initial state. For a given sequence the set of polynomials defining RRs which satisfy Golomb's condition will hence be much smaller than the set of all polynomials which define RRs and have a unique $s_{(-1)}$ term.      

\subsubsection{Complexity of sequences and cryptanalysis}
Cryptanalysis of pseudo-random (PR) generators has been a main motivation for studying RRs of their output sequences. Linear recurrence relations and LC were suggested in \cite{Rueppel} for prediction of the output sequences further than given initial part. On the other hand study of complexity of finite sequences has been of independent interest since long \cite{LempelZiv, Liuetal}. MOC and entropy of sequences were suggested in \cite{Jansen, Jansen91, JansenBoe, Peng} for studying complexity of FSR representation of sequences. Connections between MOC, LC and correlation measure of sequences and its connections with Walsh-Fourier transform has also been well studied \cite{Nied, LeylaWinterhof, Blahut, Golomb2}. It is inherently implied by the high complexity of the sequence that it's long term prediction is computationally infeasible. Hence complexity of the sequence is considered as security assurance of the PR generator for practical use. However assurance of security is based on another fundamental problem of cryptanalysis, that of key recovery from the information on the public channel. For this reason security assurance based on complexity of output sequences is not strong enough. 

Pseudo-random (PR) sequences or output streams generated by stream ciphers with non-linear operations and FSRs have been investigated for their LC profiles using the Berlekamp Massey attack \cite{Massey}. High LCs of specially generated sequences have been proved to show that such an attack is not feasible to predict the entire PR sequence from a short initial segment. However it is important to understand that prediction of further terms of a PR sequence is just one of the issues in cryptanalysis not the central one. The important issue in cryptanalysis is to determine the seed, initial vectors, symmetric and private keys of a cryptographic algorithm. Such a problem known as the key recovery problem is addressed only by the problem of inversion of the iterative sequence generated by the map or local inversion of maps. Output sequences of PR generators are not same as iterative sequences of maps at a given value in the image. Hence none of the previous studies on complexity of output sequences of PR generators address the key recovery problem. For instance even if a RRs for an output PR sequence of an encryption algorithm are known it does not allow computation of the symmetric keys or seeds without solving the inversion problem. The problem of inversion of sequences on the other hand, is of independent interest for Computational Sciences apart from addressing the key recovery problem. The local inversion problem for a map over a finite domain $F:X\rightarrow X$ is studied as the search problem by the TMTO attack algorithm \cite{Hellman,HongSarkar}. TMTO however does not use any structure and functionality of finite fields and polynomials. Hence approach to local inversion as a sequence inversion is a mathematically much sophisticated approach. 

\subsubsection{Local inversion and its application to Cryptanalysis}
Importance of the local inversion problem to solving the key recovery in cryptanalysis was pointed out in \cite{Sule1,Sule2,Sule3,Sule4}. It is shown that local inversion in principle also solves the problem of reversing RSA encryption and recovering the private key both without factoring the modulus. It is also shown how local inversion can in principle solve the discrete log problem over finite fields (DLP) as well as over elliptic curves (ECDLP). Hence local inversion provides an alternative methodology for cryptanalysis of both symmetric as well as public key cryptography. This methodology using black box computation of maps involved can be used for estimating the computational efforts and complexity of solving these known hard problem. A detailed study of density of sequences with low values of PCI for small degrees such as $2,3$ has not been studied previously and is of importance in Cryptanalysis of both symmetric ciphers and public key algorithms such as RSA and ECDLP. In previous papers referred above, local inversion was shown to be possible by using only the linear RRs and LC. In this paper we extend this method to non-linear RRs defined by polynomials and PCI.

\subsubsection{Cryptanalysis using partially available iterative sequence}\label{PartialSequence}
Cryptanalysis using inversion of a sequence also poses another fundamental issue. In a practical key recovery problem there is given a map $F$ and an output $y$ of $F$. The inversion of $F$ at $y$ is obtained from inversion of the iterative sequence (\ref{IterativeSeq}) which is given partially only upto length $M$ which is of polynomial size in $n$ (the length of unknown input). On the other hand the iterative sequence has periodic extension of length (period) exponential in $n$. This periodic sequence is the complete sequence generated by $F$ and $y$. Hence it is a crucial question how much probable it is that the inverse computed from a small partial sequence is the correct inversion of the complete sequence. In this paper we briefly address this practical issue which allows relating the local inversion of the complete sequence to that of the local inversion of the partial sequence of length $M$ and present conjectures for want of any theoretical justification. These conjectures need to be verified by carrying out computations on case studies of sequences arising in realistic problems. However this task is too massive to be within the scope of this paper hence shall be carried out in independent investigations.

Local inversion of a map is also the central objective of the TMTO attack on maps. While TMTO attack is one of the well studied algorithms \cite{Hellman, HongSarkar, Hays, BiryukovShamir} the algorithm is based on very basic operations and steps to minimize memory and time steps required in the brute force search and results in a square root order complexity relative to the domain size. The sequence inversion and its use for local inversion is based on algebraic properties of finite fields and polynomials. Hence the theory of local inverse presented in this paper is a fresh new approach using polynomial complexity and takes previous results ahead.  
\section{Conditions for associated polynomials and existence of unique inverse}
In this section we shall formulate the conditions required to be satisfied by all associated polynomials of order $m$ and degree $d$ for existence of a unique inverse. To proceed further we make one simplifying assumption or a restriction on the class of polynomials to be considered for defining the RRs as follows.

\begin{assumption}[Weak homogeneity]
A polynomial $f(X_0,X_1,\ldots,X_{(m-1)})$ considered for defining a recurrence relation for a sequence is assumed to be \emph{weakly homogeneous} i.e. satisfies 
\[
f(0,0,\ldots,0)=0
\]
\end{assumption}
The assumption restricts $f$ to those whose constant terms are zero. The assumption is mainly required for simplification of expressions and reduces complexity numbers by one. 

Another assumption introduced earlier which we emphasise is that, the finite field is restricted to be the binary field $\ftwo$ only. For more general fields results derived in this paper are yet to be developed. Moreover, in most of the practical applications of local inversion of maps even those defined by modular operations or over general finite fields as well as symmetric ciphers the sequences can be represented over the Cartesian space $\ftwo^n$. Hence the theory of sequence inversion and local inversion of maps over $\ftwo$ has ample applications. Hence we formally state

\begin{assumption}
    Problems of Inversion of Sequences and Local Inversion of Maps are studied over the binary field $\ff=\ftwo$ only. 
\end{assumption}

\subsection{Analysis of associated polynomials}
In order to understand the equations governing the associated polynomials we need to analysis the structure of representation of these polynomials in terms of their parameters. We shall introduce a more refined representation of polynomials in $P(m,d)$ to suit the equations (\ref{RRincoeff} and the condition for unique solution of (\ref{InverseRelationwithcoeff}) than considered in the Introduction section and considering the speciality of the field to be $\ftwo$.

\subsubsection{Number of co-efficients in polynomials defining recurrence}
A general weakly homogenous polynomial in $m$ variables of degree $d$ is thus of the form
\[
\begin{array}{lcl}
f(X_0,X_1,\ldots,X_m) & = & L_1(X_0,\ldots,X_{(m-1)})+L_2(X_0,\ldots,X_{(m-1)})+\ldots\\
 & & +L_d(X_0,\ldots,X_{(m-1)})
 \end{array}
\]
where $L_k$ are homogeneous polynomials or forms of degree $k$. For the binary field $\ftwo$ the polynomials $f$ are simplified to have smallest number of terms in each of the homogeneous forms $L_k$. Number of terms in each $L_k$ over $\ftwo$ of order $m$ is
\[
n(k)=\binom{m}{k}
\]
Hence the total number of terms and so also the number of co-efficients in $f$ over $\ftwo$ of degree $d$ is
\[
n_c(f,d,m)=\sum_{j=1}^{d}n(j)=\sum_{j=1}^{d}\binom{m}{j}
\]

\subsubsection{Number of equations satisfied by associated polynomials}
The number of equations which constrain the associated polynomials $f$ are dependent on the length $M$ of the sequence, the order $m$ of recurrence but not on the degree $d$. First consider a scalar sequence $S(M)$ with $n=1$ over $\ftwo$. The number of RR constraints given by equations (\ref{RRincoeff}) on $f$ are 
\[
n_r(M,m)=(M-m)
\]
to these equations we need to add the constraint arising from requirement of unique solution of $s_{(-1)}$ to (\ref{InverseRelationwithcoeff}) 
\[
f(s_{(-1)},s_0,s_1,\ldots,s_{(m-2)},\bar{a}(.))=s_{(m-1)}
\]
This imposes further restrictions on the parameters of $f$ for existence of an inverse. These are considered next.

\subsubsection{Presentation of polynomials}
To understand the constraints satisfied by associated polynomials along with constraints to have solution for a unique inverse it is useful to consider special representation of polynomials in $P(m,d)$. 
A polynomial $f$ can always be presented as sum of monomial terms denoted as follows. For numbers $i_k$ in $[0,(n-1)]$ and $n$ variables $X_0,X_1,\ldots,X_{(n-1)}$, a $k$ degree monomial $m(A)$ for indices $i_1<i_2<\ldots<i_k$, $0\leq i_k\leq (n-1)$ is defined by
\[
m(i_1,\ldots,i_k)=X_{i_1}X_{i_2}\ldots X_{i_k}
\]
for $i_1<i_2<\ldots<i_k$. The indexing of variables defines a lexicographical order $X_i<X_j$ for $i<j$. An order on monomials $m(A)$ is defined as
\[
m(i_1,i_2,\ldots,i_r)\leq m(j_1,j_2,\ldots,j_l)\mbox{ iff }r\leq l\mbox{ and }i_k\leq j_k
\]
If $r>l$ or $i_k>j_k$ for some $k$ then $M(i_1,\ldots,i_r)>m(j_1,\ldots,j_l)$.

A polynomial $f$ in $P(m,d)$ is then presented as a sum of monomials $m(A)$ of increasing order.
\[
\begin{array}{lcl}
f(X_0,\ldots,X_{(n-1)}) & = & \sum_{A}a(A)m(A)\\
& = & c+\sum_{i}a(i)X_i+\sum_{i<j}a(i,j)X_{i}X_{j}+\\
 & & \mbox{higher order terms}
\end{array}
\]
with $c$ as the constant term as the coefficient of the empty (or zero degree) monomial $1$.

When a polynomial expression $f$ with number of variables $m$ and degree $1\leq d\leq m$ is considered for describing RRs of a sequence, the polynomial is denoted as $f(X_0,\ldots,X_{(m-a)},\bar{a})$ where $\bar{a}$ denotes the vector of unknown co-efficients of $f$ of monomial terms of increasing order. After evaluation of variables of $f$ at the sequence terms $(s_k,s_{k+1},\ldots,s_{(k+m-1)}$ we get the affine form
\[
f(S,\bar{a})=c+\sum_{A} a(A)\prod s_k\ldots s_{(k+m-1)}
\]
with the constant the term $c$ as the co-efficient of the zero degree monomial $1$. The co-efficient vector of $f$ presented as sum of monomials of increasing order is denoted as $\bar{a}$.

\subsubsection{Nature of equations satisfied by associated polynomials}
We first consider the case of scalar sequence $S(M)$. The equations of RRs (\ref{RRincoeff}) are linear in the co-efficients of $f$ while the equation (\ref{InverseRelationwithcoeff}) has bilinear terms in co-efficients of $f$ and the unknown inverse $s_{(-1)}$ along with linear terms in co-efficients of $f$. This can be observed as follows. Hence we consider a more refined representation of the polynomial expression $f$ in two terms than just as sum of monomials
\beq\label{PolynforRR}
\begin{array}{lcl}
f(X_0,X_1,\ldots,X_{(m-1)},\bar{a},\bar{b}) & = & X_0h(X_1,\ldots,X_{(m-1)},\bar{a})\\ & & +g(X_1,\dots,X_{(m-1)},\bar{b})
\end{array}
\eeq
where $\bar{a}$ and $\bar{b}$ denote the tuples of co-efficients of polynomials $h$ and $g$ respectively when $h$ and $g$ are represented as sums of monomials. Together they constitute the co-efficients of $f$.

The weak homogeneity assumption on $f$ implies that $g(0,0,\ldots,0,\bar{b})=0$ for all $\bar{b}$. Then the evaluation of $f$ at $\{s_{(-1)},S\}$ as in the (\ref{InverseRelationwithcoeff}) is
\[
f(s_{(-1)},s_0,\dots,s_{(m-2)},\bar{a},\bar{b})=s_{(-1)}h(s_0,\ldots,s_{(m-2)},\bar{a})+g(s_0,\ldots,s_{(m-2)},\bar{b})
\]
Thus components of $\bar{a}$ appear affine linearly in the function $h$ and that of $\bar{b}$ appear linearly in $g$. Hence the first term $s_{(-1)}h$ is bilinear in $s_{(-1)}$ and $\bar{a}$ while the second term $g$ is linear in $\bar{b}$ when the terms $\{s_0,s_1,\ldots,s_{(m-2)}\}$ are substituted in $h$ and $g$. Hence $h(s_0,s_1,\ldots,s_{(m-2)})$ and $g(s_0,s_1,\ldots,s_{(m-2)})$ can be expressed as scalar products,
\beq\label{vhvg}
\begin{array}{lcl}
h(s_0,s_1,\ldots,s_{(m-2)},\bar{a}) & = & <v_h(S),\bar{a}>\\
g(s_0,s_1,\ldots,s_{(m-2)},\bar{b}) & = & <v_g(S),\bar{b}>
\end{array}
\eeq
where $v_h(S)$ and $v_g(S)$ depend on the initial sequence. The equations satisfied by $f$ for RRs (\ref{RRincoeff}) are denoted as
\beq\label{eqnRR}
H(m,d)
\barr{c}\bar{a}\\\bar{b}\earr
=H_1(S)\bar{a}+H_2(S)\bar{b}=
\bar{s}
\eeq
where $\bar{s}=[s_m,s_{(m+1)},\ldots,s_{(M-1)}]^T$, $\bar{a}$, $\bar{b}$ are co-efficients of polynomials $h$ and $g$ respectively.

\begin{definition}[Hankel matrix $H(m,d)(S)$]\label{HankelMatrix}
    For a polynomial expression $f$ with parameters $\bar{a}$, $\bar{b}$ in the form (\ref{PolynforRR}), we call the matrix 
    \[
    H(m,d)(S)=[H_1(S),H_2(S)]
    \]
    of co-efficients of the linear equations (\ref{eqnRR}) the \emph{Hankel matrix} of $f$, defined by $m,d$ and monomials of $f$ evaluated at the scalar sequence $S(M)$. 
\end{definition}

\begin{remark}\label{Constancyofeval}
    It is important to observe that once $m$ and $d$ are chosen, a general polynomial expression $f$ represented in (\ref{PolynforRR}) has a unique set of possible monomial terms giving the unique Hankel matrix $H(m,d)(S)$ and vectors $v_h(S)$ and $v_g(S)$ evaluated at the given $S$. In other words the Hankel matrix determines $f$ from RRs and only depends on $m,d$ and $S$. Further for a fixed $m$, $H(m,d_2)(S)$ contains all the columns of $H(m,d_1)(S)$ for $d_2\geq d_1$.
\end{remark}

\subsection{Fundamental lemma}
Using the expression (\ref{PolynforRR}) for $f$ in terms of two functions $h$ and $g$ with their co-efficient parameters $\bar{a}$ and $\bar{b}$ respectively, necessary and sufficient conditions for existence of associated polynomials and solution of an inverse for associated polynomials are described by the following lemma.

\begin{lemma}[Fundamental Lemma]\label{FundLemma} Let $S(M)$ be a scalar sequence, $f$ be a polynomial expression in $P(m,d)$ with co-efficients $\bar{a}$, $\bar{b}$ respectively of functions $h$ and $g$ in (\ref{PolynforRR}). Then $f$ is an associated polynomial with a unique inverse $s_{(-1)}$ iff 
\begin{enumerate}
    \item Solutions $\bar{a}$, $\bar{b}$ of equations of RRs (\ref{eqnRR}) 
    \[
    H(m,d)(S)\barr{c}\bar{a}\\\bar{b}\earr=H_1(S)\bar{a}+H_2(S)\bar{b}=\bar{s}
    \]
    and the equation
    \beq\label{condonh}
    <v_h(S),\bar{a}>=1
    \eeq
    exist.
    \item The unique inverse determined by $f$ (defined by any solution $\bar{a}$, $\bar{b}$ is 
    \beq\label{Inverse}
    s_{(-1)}=s_{(m-1)}+<v_g(S),\bar{b}>
    \eeq
    \end{enumerate}
    The set of all associated polynomials in $P(m,d)$ and the inverses to $S(M)$ defined by them are determined this way. 
\end{lemma}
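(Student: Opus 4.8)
The statement is an ``if and only if'' about a single polynomial expression $f$ written in the split form (\ref{PolynforRR}), so the plan is to separate the two requirements in the conclusion --- that $f$ be \emph{associated} and that it yield a \emph{unique} inverse --- and verify each in turn, exploiting the fact that both reduce to linear (indeed affine) conditions once the field is fixed to $\ftwo$. The whole argument is essentially a bookkeeping computation; the only place requiring genuine care is the case analysis of unique solvability of a scalar equation over $\ftwo$.

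First I would dispose of the associated-polynomial condition. By the definition of an associated polynomial, $f$ satisfies the recurrence relations (\ref{RRincoeff}) for $S(M)$ precisely when its coefficients make each of the $M-m$ equations hold; since every such equation is linear in $\bar{a}$ and $\bar{b}$, this is exactly the statement that $(\bar{a},\bar{b})$ solves the Hankel system (\ref{eqnRR}), i.e. $H_1(S)\bar{a}+H_2(S)\bar{b}=\bar{s}$. This identification is immediate from Definition \ref{HankelMatrix} and needs no further work; it establishes the RR half of condition~1 in both directions.

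Next I would analyse the inverse. Substituting the prefix $\{s_{(-1)},s_0,\ldots,s_{(m-2)}\}$ into the split form (\ref{PolynforRR}) and using weak homogeneity (which kills any constant term of $g$), the inversion relation (\ref{InverseRelationwithcoeff}) becomes, via the scalar-product identities (\ref{vhvg}),
\[
s_{(m-1)}=s_{(-1)}\,\langle v_h(S),\bar{a}\rangle+\langle v_g(S),\bar{b}\rangle .
\]
This is affine in the single unknown $s_{(-1)}$ over $\ftwo$, with multiplier $\langle v_h(S),\bar{a}\rangle$. Here is the crux: over $\ftwo$ an equation $c\,s_{(-1)}=b$ has a unique solution iff $c=1$, while $c=0$ forces either no solution (if $b=1$) or both field elements as solutions (if $b=0$). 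Hence the inverse exists and is unique exactly when $\langle v_h(S),\bar{a}\rangle=1$, which is precisely (\ref{condonh}); and when this holds, adding $\langle v_g(S),\bar{b}\rangle$ to both sides (subtraction equals addition over $\ftwo$) gives the closed form $s_{(-1)}=s_{(m-1)}+\langle v_g(S),\bar{b}\rangle$ of (\ref{Inverse}).

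Assembling the two directions is then routine: if $f$ is associated with a unique inverse, the RR system holds and the uniqueness forces (\ref{condonh}), while the inverse is read off as (\ref{Inverse}); conversely, any $(\bar{a},\bar{b})$ solving the RR system together with (\ref{condonh}) yields an associated $f$ whose inverse equation has the unique solution (\ref{Inverse}). The final clause --- that this procedure enumerates \emph{all} associated polynomials and their inverses --- follows by letting $(\bar{a},\bar{b})$ range over the entire solution set of (\ref{eqnRR}) intersected with (\ref{condonh}). I expect the main (and only) obstacle to be stating the $\ftwo$ case analysis cleanly, since for $d=1$ this recovers the classical condition $a_0\neq 0$ but for $d>1$ the multiplier $\langle v_h(S),\bar{a}\rangle$ depends nonlinearly on the chosen solution, so one must be careful that (\ref{condonh}) is imposed \emph{as an additional constraint} on the RR solution set rather than as something following automatically from it.
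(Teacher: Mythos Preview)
Your proposal is correct and follows essentially the same approach as the paper: identify the RR constraints with the linear Hankel system (\ref{eqnRR}), reduce the inversion relation (\ref{InverseRelationwithcoeff}) via the split form (\ref{PolynforRR}) to $s_{(m-1)}=s_{(-1)}\langle v_h(S),\bar a\rangle+\langle v_g(S),\bar b\rangle$, and observe that unique solvability over $\ftwo$ is equivalent to (\ref{condonh}), whence (\ref{Inverse}) follows. Your explicit $\ftwo$ case analysis for $c\,s_{(-1)}=b$ is slightly more detailed than the paper's proof, but the logical skeleton is identical.
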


\begin{proof}
By definition an associated polynomial $f$ satisfies the recurrence relations RRs (\ref{RR}). For a polynomial with representation (\ref{PolynforRR}) in terms of parameters $\bar{a}$, $\bar{b}$ these equations are precisely (\ref{eqnRR}). Further, a unique inverse is determined by an associated $f$ iff $h(s_0,\ldots,s_{(m-2)})=1$ which is the equation (\ref{condonh}). The equation (\ref{InverseRelationwithcoeff}) in terms of an inverse $s_{(-1)}$ with the constraint on $h$ gives the inverse as in (\ref{Inverse}). This is explained below by reproducing the equation (\ref{InverseRelationwithcoeff}) for convenience,  
\[
    \begin{array}{lcl}
    s_{(m-1)} & = & f(s_{(-1)},s_0,s_1,\ldots,s_{(m-2)},\bar{a},\bar{b})\\
     & = & 
     s_{(-1)}h(s_0,s_1,\ldots,s_{(m-2)},\bar{a})+g(s_0,s_1,\ldots,s_{(m-2)},\bar{b})
     \end{array}
\]
    Hence this condition for unique solvability of $s_{(-1)}$ implies that
    \[
    h(s_0,s_1,\ldots,s_{(m-2)},\bar{a})=1\Leftrightarrow <v_h(S),\bar{a}>=1
    \]
    The formula for inverse then follows from (\ref{InverseRelationwithcoeff}) by evaluation of $f$ on $S$ and using the representation of $f$ in (\ref{PolynforRR}).

    Conversely if (\ref{eqnRR}) has a solution $\bar{a}$, $\bar{b}$ such that $<v_h,\bar{a}>=1$, then the polynomial function defined by 
    \[
    f(X_0,X_1,\ldots,X_{(m-1)})=X_0h(X_1,\ldots,X_{(m-1)},\bar{a})+g(X_0,\ldots,X_{(m-1)},\bar{b})
    \]
    satisfies the RRs (\ref{eqnRR}) and $h$ satisfies $h(s_0,\ldots,s_{(m-2)},\bar{a})=1$. Hence there exists a unique solution $s_{(-1)}$ of the (\ref{InverseRelationwithcoeff}) for $\bar{b}$ defining $g$ and is given by (\ref{Inverse}). Hence all associated polynomials and inverses to $S(M)$ defined by them are determined by equations (\ref{eqnRR}) and (\ref{condonh}).
\end{proof}

\subsubsection{Projection representation of $\bar{b}$}
We next consider the condition of the fundamental lemma for $f$ to be an associated polynomial with an inverse and show a decomposition of the equations defining the parameters $\bar{a}$, $\bar{b}$. 

Let $f$ in $P(m,d)$ be an associated polynomial of $S$ and determines a unique inverse. Then $f$ with parameters $\bar{a}$ and $\bar{b}$ of the polynomials $h$, $g$ in (\ref{PolynforRR}) satisfies equations (\ref{eqnRR}) and (\ref{condonh}). Hence there exists an elementary matrix $E$ over $\ftwo$ such that 
\beq\label{Projection}
E
\barr{ll}
H_1(S) & H_2(S)\\v_h & 0
\earr
\barr{c}\bar{a}\\\bar{b}\earr
=
\barr{ll}
H_{11}(S) & H_{12}(S)\\0 & H_{22}(S)
\earr
\barr{c}\bar{a}\\\bar{b}\earr
\eeq
such that $H_{11}(S)$ has full row rank. Let
\[
E\barr{c}\bar{s}\\1\earr=
\barr{c}\bar{s}_1\\\bar{s}_2\earr
\]
the condition of the fundamental lemma for $f$ to be an associated polynomial with a unique inverse corresponds to existence of a solution $\bar{a}$, $\bar{b}$ to the system
\beq\label{Constraintonh}
\barr{ll}
H_1(S) & H_2(S)\\v_h & 0
\earr
\barr{c}\bar{a}\\\bar{b}\earr
=
\barr{c}\bar{s}\\1\earr
\eeq

\begin{lemma}[Projection representation of $\bar{b}$]\label{LemProjRep}
Projection of the solution of parameters $\bar{a}$, $\bar{b}$ of an associated polynomial $f$ in $P(m,d)$ on the parameters $\bar{b}$ is given by the solution $\bar{b}$ of the linear system 
\beq\label{Projonb}
H_{22}(S)\bar{b}=\bar{s}_2
\eeq
\end{lemma}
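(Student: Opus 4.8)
The plan is to exploit the block upper-triangular reduction (\ref{Projection}) together with the invertibility of the elementary matrix $E$, and then read off the bottom block of equations. First I would note that $E$, being (a product of) elementary row operations over $\ftwo$, is invertible; hence left-multiplying the combined system (\ref{Constraintonh}) by $E$ produces an \emph{equivalent} system, with identical solution set in the unknowns $(\bar{a},\bar{b})$. Applying $E$ and substituting (\ref{Projection}) together with the defining relation $E[\bar{s}^T,1]^T=[\bar{s}_1^T,\bar{s}_2^T]^T$ converts (\ref{Constraintonh}) into
\[
\begin{bmatrix} H_{11}(S) & H_{12}(S) \\ 0 & H_{22}(S) \end{bmatrix}\begin{bmatrix}\bar{a}\\\bar{b}\end{bmatrix}=\begin{bmatrix}\bar{s}_1\\\bar{s}_2\end{bmatrix}.
\]
Since $E$ acts by row operations only, the partition of columns into the $\bar{a}$-block and the $\bar{b}$-block is undisturbed, so the zero block in the lower-left corner decouples $\bar{a}$ from the bottom rows. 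The bottom block therefore reads $H_{22}(S)\bar{b}=\bar{s}_2$, which is exactly (\ref{Projonb}) and involves $\bar{b}$ alone. This yields one inclusion immediately: the $\bar{b}$-component of any solution of (\ref{Constraintonh}) must satisfy (\ref{Projonb}).

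For the converse inclusion I would use the full-row-rank property of $H_{11}(S)$ guaranteed by (\ref{Projection}). Given any $\bar{b}$ with $H_{22}(S)\bar{b}=\bar{s}_2$, the top block of the reduced system becomes $H_{11}(S)\bar{a}=\bar{s}_1-H_{12}(S)\bar{b}$, a linear system in $\bar{a}$ with a fixed right-hand side. Because $H_{11}(S)$ has full row rank, the linear map $\bar{a}\mapsto H_{11}(S)\bar{a}$ is surjective over $\ftwo$, so this system is solvable for \emph{every} right-hand side. Choosing any such $\bar{a}$ produces a pair $(\bar{a},\bar{b})$ satisfying the reduced system, and hence, by invertibility of $E$, satisfying (\ref{Constraintonh}). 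Thus each solution of (\ref{Projonb}) is realised as the $\bar{b}$-component of some solution of the full system, establishing that the projection of the solution set onto the $\bar{b}$-coordinates coincides exactly with the solution set of (\ref{Projonb}).

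I expect the only delicate point to be the bookkeeping that left multiplication by $E$ preserves the column partition into $\bar{a}$- and $\bar{b}$-blocks, so that the lower-left zero block of (\ref{Projection}) genuinely eliminates $\bar{a}$ from the bottom equations and the truncation $\bar{s}_2$ is well matched to the row size of $H_{22}(S)$; once this is pinned down, the remainder is just invariance of the solution set under an invertible row transformation plus surjectivity from full row rank. I would also remark that the extension step constructs, for each admissible $\bar{b}$, a nonempty fibre of compatible $\bar{a}$, which is precisely the content of the word \emph{projection} in the statement, and that the existence guaranteed by the Fundamental Lemma (Lemma~\ref{FundLemma}) ensures at least one such $\bar{b}$ exists so that the solution set of (\ref{Projonb}) is itself nonempty.
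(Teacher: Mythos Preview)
Your proposal is correct and follows essentially the same approach as the paper's own proof: use the invertibility of the elementary matrix $E$ to pass to the equivalent block-triangular system, read off the bottom block $H_{22}(S)\bar{b}=\bar{s}_2$, and invoke the full row rank of $H_{11}(S)$ to lift any solution $\bar{b}$ of (\ref{Projonb}) to a full solution $(\bar{a},\bar{b})$. Your write-up is in fact slightly more explicit than the paper's in separating the two inclusions and in noting that row operations preserve the column partition, but the underlying argument is identical.
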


\begin{proof}
    The elementary matrix $E$ in (\ref{Projection}) is the matrix corresponding to row elementary transformations which will transform the matrix
    \[
    \barr{ll}
    H_1(S) & H_2(S)\\v_h & 0
    \earr
    \]
    into row echelon form. Since there exists a solution to the parameters $\bar{a}$, $\bar{b}$, the number of LI rows in $H_{11}(S)$ is at most equal to the number of columns or the size of $\bar{a}$ and the equation (\ref{Projonb}) also has a solution $\bar{b}$. Then since $H_{11}(S)$ has full rank there exists a solution $\bar{a}$ to the equation
    \beq\label{Solnofparbara}
    H_{11}(S)\bar{a}+H_{12}(S)\bar{b}=\bar{s}_1
    \eeq
    for any $\bar{b}$. Since $E$ is an elementary matrix, the set of all solutions of the equation (\ref{Constraintonh}) is the same as the set of solutions of 
    \beq\label{ReducedeqnbyE}
    \barr{ll}
    H_{11}(S) & H_{12}(S)\\0 & H_{22}(S)
    \earr
    \barr{c}\bar{a}\\\bar{b}\earr
    =
    \barr{c}\bar{s}_1\\\bar{s}_2\earr
    \eeq
    This proves that the solution of (\ref{Projonb}) is the projection of the solution $\bar{a}$, $\bar{b}$ satisfying (\ref{Constraintonh}).
\end{proof}

\subsection{Conditions for associated polynomials to determine unique inverses}
Fundamental lemma gives necessary and sufficient conditions for a polynomial $f$ to be an associated polynomial of $S$ and determine a unique inverse. For a fixed degree $d$ and order $m$ there may exist several associated polynomials in $P(m,d)$ each defining an inverse. All such polynomials are determined from the representation (\ref{PolynforRR}) by solving for the indeterminate parameters $\bar{a}$, $\bar{b}$ from equations of RRs (\ref{RRincoeff}) and the inversion condition (\ref{condonh}). We now develop further conditions for existence of this class of all associated polynomials in $P(m,d)$ to determine unique inverse.

Consider again a given finite scalar sequence $S(M)=\{s_0,s_1,\ldots,s_{(M-1)}\}$ of length $M$ where the individual elements $s_i$ belong to $\ftwo$. Let a polynomial $f$ in $m$ variables is to be found such that it defines a RR for $S(M)$ i.e.\ $f$ satisfies (\ref{RRincoeff}) and there exists a solution to $s_{(-1)}$ of the equation (\ref{InverseRelationwithcoeff}). ($f$ will be always assumed to satisfy the weak homogeneity assumption). However inverses $s_{(-1)}$ determined by different $f$ can be different.  

\begin{theorem}\label{ThInverse}
    There exist associated polynomials $f$ in $P(m,d)$ for the sequence $S(M)$ defined by parameters $\bar{a}$, $\bar{b}$ in representation (\ref{PolynforRR}) and determine a unique inverse $s_{-1}$ iff the matrix
    $H_{22}(S)$ in the projection representation (\ref{Projonb}) satisfies
    \beq\label{Projrankcond}
    \rank H_{22}(S)=\rank [H_{22}(S),\bar{s}_2]
    \eeq
    If $\bar{b}$ is any solution of (\ref{Projonb}) then the unique inverse for $f$ defined by $\bar{b}$ is given by
    \beq\label{TheInverse}
    s_{(-1)}=s_{(m-1)}+<v_g(S),\bar{b}>
    \eeq
    for all solutions $\bar{a}$ of (\ref{Solnofparbara}).
\end{theorem}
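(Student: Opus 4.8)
The plan is to read off the whole statement as an application of the Fundamental Lemma (Lemma~\ref{FundLemma}) together with the block reduction already set up for Lemma~\ref{LemProjRep}, reducing the existence question to the consistency of a single linear system to which the standard rank criterion applies. By Lemma~\ref{FundLemma}, a polynomial $f$ in $P(m,d)$ with parameters $\bar a,\bar b$ is an associated polynomial that determines a unique inverse exactly when the pair $(\bar a,\bar b)$ solves the recurrence equations (\ref{eqnRR}) \emph{and} the inversion constraint (\ref{condonh}), i.e.\ when $(\bar a,\bar b)$ is a solution of the combined system (\ref{Constraintonh}). Hence the assertion ``there exist associated polynomials with a unique inverse'' is, word for word, the solvability of (\ref{Constraintonh}), and my first step is to make this identification explicit.

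Next I would transfer solvability through the elementary (hence invertible) row transformation $E$ of (\ref{Projection}). Because $E$ is invertible over $\ftwo$, the solution set of (\ref{Constraintonh}) coincides with that of the block upper-triangular system (\ref{ReducedeqnbyE}), in which $H_{11}(S)$ has full row rank and the lower-left block is zero; this is precisely the reduction justified in the proof of Lemma~\ref{LemProjRep}. The key structural observation is the decoupling: since $H_{11}(S)$ has full row rank, the top block $H_{11}(S)\bar a + H_{12}(S)\bar b=\bar s_1$ is solvable for $\bar a$ for \emph{every} choice of $\bar b$ (the map $\bar a\mapsto H_{11}(S)\bar a$ is surjective). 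Consequently (\ref{ReducedeqnbyE}) is solvable if and only if its bottom block, the projected system $H_{22}(S)\bar b=\bar s_2$ of (\ref{Projonb}), admits a solution $\bar b$. Applying the standard consistency criterion for a linear system---$Ax=c$ is solvable iff $\rank A=\rank[A,c]$, valid over any field and in particular over $\ftwo$---to (\ref{Projonb}) yields exactly the rank condition (\ref{Projrankcond}). Chaining these equivalences establishes the ``iff'' of the theorem.

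For the inverse formula I would argue constructively. Given any solution $\bar b$ of (\ref{Projonb}), the full row rank of $H_{11}(S)$ furnishes a solution $\bar a$ of (\ref{Solnofparbara}), so $(\bar a,\bar b)$ solves (\ref{Constraintonh}); Lemma~\ref{FundLemma} then certifies that the corresponding $f$ is an associated polynomial with a unique inverse and supplies the value $s_{(-1)}=s_{(m-1)}+\langle v_g(S),\bar b\rangle$ of (\ref{TheInverse}). Since the right-hand side involves only $\bar b$ through $v_g(S)$ and not $\bar a$, the formula is independent of which solution $\bar a$ of (\ref{Solnofparbara}) is selected, which is the content of the closing clause ``for all solutions $\bar a$.''

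The argument is almost entirely bookkeeping on top of the two preceding lemmas, so I do not expect a serious obstacle; the one point that must be handled carefully is the decoupling step, namely that consistency of the full system (\ref{Constraintonh}) is faithfully captured by the bottom block alone. This rests on $H_{11}(S)$ having full row rank after the reduction, and hence on the claim---inherited from the construction of $E$ in (\ref{Projection}) and the proof of Lemma~\ref{LemProjRep}---that pivoting on the $\bar a$-columns both clears the lower-left block and leaves $H_{11}(S)$ with independent rows. Once that block structure is granted, everything else is the surjectivity of a full-row-rank map and the Rouch\'e--Capelli criterion, and the restriction to $\ftwo$ introduces no additional difficulty.
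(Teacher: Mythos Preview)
Your proposal is correct and follows essentially the same route as the paper: reduce existence to solvability of (\ref{Constraintonh}) via the Fundamental Lemma, pass through the elementary reduction $E$ to (\ref{ReducedeqnbyE}), use the full row rank of $H_{11}(S)$ to decouple and reduce to (\ref{Projonb}), and read off the inverse formula from Lemma~\ref{FundLemma}. If anything, your write-up is more explicit about the decoupling step and the Rouch\'e--Capelli criterion than the paper's own proof.
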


\begin{proof}
    First condition is equivalent to the solvability of $\bar{a}$, $\bar{b}$ in equations (\ref{ReducedeqnbyE}) which is the same as the first two conditions of the fundamental lemma, lemma 1, or the solvability of equations (\ref{Constraintonh}). Then from lemma \ref{LemProjRep} the solutions of $\bar{b}$ are the solutions of the system of equations (\ref{Projonb}). For every solution $\bar{b}$ of (\ref{Projonb}) all associated polynomials $f$ are defined by $\bar{b}$ together with solutions of parameters $\bar{a}$ of the equation (\ref{Solnofparbara}). The formula for inverse is same as established in the fundamental lemma and depends on the value $<v_g(S),\bar{b}>$. Here $v_g(S)$ gets defined by the terms of the polynomial $g$ evaluated at the initial part of the sequence $S$. This value is precisely $<v_g(S),\bar{b}>$. This proves the first part of the statement of the theorem.
 \end{proof}

\subsubsection{Equivalence class of polynomials for common inverse}
We call associated polynomials $f_1$, $f_2$ in $P(m,d)$ for which inverses of $S(M)$ exist, to be \emph{equivalent with respect to the inversion}, if the inverses determined by the equation (\ref{InverseRelationwithcoeff}) for each of $f_1$, $f_2$ are the same for $S(M)$. This is an equivalence relation on the set of associated polynomials having inverse and since the field is binary, partitions this set in two subsets since $s_{(-1)}$ and its complement are the only two possibilities for an inverse. For a vector sequence each co-ordinate sequence imposes conditions for defining the equivalence class of $f$. Hence it is of interest to determine conditions under which there exists a unique common inverse for all those polynomials in $P(m,d)$ which are associated polynomials of the vector sequence $S(M)$.

The set of all associated polynomials $f$ in $P(m,d)$, for which unique inverse exist are defined by solutions of parameters $\bar{a}$, $\bar{b}$. However as noted in Remark \ref{Constancyofeval}), the Hankel matrix $H(f,S)$ and evaluations of $h$, $g$ the vectors $v_h(S)$, $v_g(S)$ are fixed by $m$ and $d$ for the given sequence $S$. Hence all associated $f$ in Theorem \ref{TheInverse}  have a common inverse iff for each of the solutions $\bar{b}$ of (\ref{Projonb}) the function $g$ evaluated at $S$ is constant (either exclusively $0$ or $1$). This condition is captured in the next theorem. 

\begin{theorem}\label{ThCommonInverse}
All associated polynomials $f$ defined by solutions of co-effcients $\bar{b}$ of (\ref{Projonb}) and corresponding solutions $\bar{a}$ of (\ref{Solnofparbara}), have the same common inverse iff the matrix $H_{22}(S)$ satisfies the following condition with respect to the vector $v_g(S)$ 
    \beq\label{Constancycond}
    \rank H_{22}(S)=\rank
    \barr{c}H_{22}(S)\\v_g(S)\earr
    \eeq
\end{theorem}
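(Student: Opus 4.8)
The plan is to reduce the ``common inverse'' property to a single algebraic condition on the vector $v_g(S)$ relative to the solution set of the projected system (\ref{Projonb}), and then to translate that condition into the stated rank equality. By Theorem \ref{ThInverse}, every associated polynomial $f$ with a unique inverse is determined by a solution $\bar{b}$ of $H_{22}(S)\bar{b}=\bar{s}_2$ together with a corresponding $\bar{a}$ solving (\ref{Solnofparbara}), and the inverse it determines is $s_{(-1)}=s_{(m-1)}+\langle v_g(S),\bar{b}\rangle$. Since $s_{(m-1)}$ is fixed by the sequence, all these polynomials share a common inverse if and only if the scalar $\langle v_g(S),\bar{b}\rangle$ takes the same value across every solution $\bar{b}$ of (\ref{Projonb}).

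First I would parametrize the solution set. Because we are in the situation of Theorem \ref{ThInverse}, a particular solution $\bar{b}_0$ of (\ref{Projonb}) exists, and the full solution set is the affine set $\bar{b}_0+\ker H_{22}(S)$. Writing $\bar{b}=\bar{b}_0+\bar{w}$ with $\bar{w}\in\ker H_{22}(S)$ gives $\langle v_g(S),\bar{b}\rangle=\langle v_g(S),\bar{b}_0\rangle+\langle v_g(S),\bar{w}\rangle$. The first term is a constant, so the value is the same for all solutions if and only if $\langle v_g(S),\bar{w}\rangle=0$ for every $\bar{w}\in\ker H_{22}(S)$; that is, $v_g(S)$ is orthogonal to $\ker H_{22}(S)$ with respect to the standard inner product over $\ftwo$.

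The remaining step, which is the crux, is to identify $(\ker H_{22}(S))^{\perp}$ with the row space of $H_{22}(S)$ over $\ftwo$. One inclusion is immediate: if $v_g(S)$ is a combination of rows of $H_{22}(S)$, each of which annihilates $\ker H_{22}(S)$, then so does $v_g(S)$. For the converse I would use the non-degeneracy of the standard bilinear form $\langle x,y\rangle=\sum x_i y_i$ on $\ftwo^{\,k}$, where $k$ is the length of $\bar{b}$: its Gram matrix is the identity, so for any subspace $W$ one has $\dim W+\dim W^{\perp}=k$ and hence $W^{\perp\perp}=W$. Applying this to $W=\,$row space of $H_{22}(S)$, whose orthogonal complement is exactly $\ker H_{22}(S)$, yields $(\ker H_{22}(S))^{\perp}=\,$row space of $H_{22}(S)$. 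Thus $v_g(S)\perp\ker H_{22}(S)$ is equivalent to $v_g(S)$ lying in the row space of $H_{22}(S)$.

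Finally I would restate ``$v_g(S)$ in the row space of $H_{22}(S)$'' as the rank identity: appending $v_g(S)$ as an extra row to $H_{22}(S)$ leaves the rank unchanged precisely when $v_g(S)$ is a linear combination of the existing rows, which is the condition (\ref{Constancycond}). I expect the main obstacle to be the duality step over $\ftwo$: unlike over $\real$, the form admits isotropic vectors and $W\cap W^{\perp}$ need not be trivial, so the argument must rest on the dimension count $\dim W+\dim W^{\perp}=k$ for the non-degenerate standard form rather than on any orthogonal-direct-sum decomposition.
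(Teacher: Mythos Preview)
Your argument is correct and follows essentially the same route as the paper: reduce to constancy of $\langle v_g(S),\bar{b}\rangle$ on the affine solution set $\bar{b}_0+\ker H_{22}(S)$, conclude that $v_g(S)$ must annihilate $\ker H_{22}(S)$, and then identify this with membership in the row space of $H_{22}(S)$, i.e.\ the rank condition (\ref{Constancycond}). The only difference is that you supply a careful justification of the duality step over $\ftwo$ via the dimension formula $\dim W+\dim W^{\perp}=k$, whereas the paper simply asserts that orthogonality to the kernel is equivalent to lying in the row span.
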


\begin{proof}
    The formula for inverse for an associated polynomial $f$ defined by parameters $\bar{a}$ and $\bar{b}$ is as given in (\ref{TheInverse}). Hence the inverse $s_{(-1)}$ is only dependent on parameters $\bar{b}$ of $f$ in terms of the value of the function $g(S)$ which equals $<v_g(S),\bar{b}>$. Hence the inverse is common across all associated polynomials iff 
    \[
    <v_g(S),\bar{b}>=\mbox{ constant}
    \]
    Since all solutions of $\bar{b}$ are that of the equation (\ref{Projonb}), these form the set
    \[
    \{\bar{b}=b_0+\ker H_{22}(S)\}
    \]
    where $b_0$ is one solution of (\ref{Projonb}). Hence $<v_g(S),\bar{b}>$ is constant equal to $<v_g(S),b_0>$ iff $<v_g(S),x>=0$ for all $x$ in $\ker H_{22}(S)$. This implies that the vector $v_g(S)$ is in the span of rows of the matrix $H_{22}(S)$ which is expressed as condition (\ref{Constancycond}).
\end{proof}

This section was devoted to developing necessary and sufficient conditions for polynomials in $P(m,d)$ to be associated polynomials of a scalar sequence $S(M)$ and to determine a unique inverse. Similarly these conditions are extended in Theorem (\ref{ThCommonInverse}) such that the set of all associated polynomials of $S(M)$ in $P(m,d)$ determine a common inverse. In the next section we consider the problem of computing the complexity of inversion \PCId.

\section{Computation of Polynomial Complexity of Inversion \PCId}
We now consider the problem of computation of the complexity of inversion \PCId. From Definition \ref{DefPCofI} the complexity of inversion of a scalar sequence $S(M)$ is the smallest order $m$ of RRs (\ref{RR}) for which there exists $1\leq d\leq m$ and a polynomial $f$ in $P(m,d)$ such that $f$ is an associated polynomial, the RRs involve a maximal number of LI relations over $S(M)$ and $f$ determines a unique inverse $s_{(-1)}$. The polynomial $f$ is uniquely defined by the co-efficients $\bar{a}$, $\bar{b}$ which arise as the solution of the linear equations (\ref{eqnRR}) involving the Hankel matrix $H(m,d)$. The condition for $m$ to represent the complexity of inversion is captured as in the following theorem. From the equations (\ref{eqnRR}) it follows that the number $n_C$ of columns of $H(m,d)$ is the largest number of monomials possible in a weakly homogeneous $f$ in $P(m,d)$,
\[
n_C=\sum_{i=1}^{d}\binom{m}{i}
\]
The number of equations in (\ref{eqnRR}) is equal to $M-m$ which is the length of the vector $\bar{s}$ which is equal to the number of rows of $H(m,d)$. By definition \ref{DefPCofI}, the complexity $m$ is such that the RRs contain maximal number of LI equations hence in order to ensure that upto the given length $M$ sufficient number of equations are available to solve for the co-efficients of $f$, the number of monomials must satisfy the bound  

\beq\label{Boundonmd}
M-m\geq n_C
\eeq

\begin{theorem}\label{ThPCId}
    A scalar sequence $S(M)$ has \PCId\ of $m$ at a fixed degree $d$ iff $m\geq d$ is the smallest number such that 
    \begin{enumerate}
        \item The bound (\ref{Boundonmd}) holds.
        \item The Hankel matrix $H(m,d)(S(M))$ has maximal rank with respect to $m$ at the sequence considered i.e.
        \beq\label{Maxrank}
        \rank H(m,d)(S(M))=\max_{k\geq d} \rank H(k,d)(S(M))
        \eeq 
        \item There exists a solution to co-efficients $\bar{a}$, $\bar{b}$ of associated polynomials i.e. the condition (\ref{Constraintonh}) holds 
        \end{enumerate}
    The unique inverse for associated polynomials $f$ is 
    \[
    s_{(-1)}=s_{(m-1)}+<v_g(S),\bar{b}>
    \]    
\end{theorem}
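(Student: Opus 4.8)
The plan is to recognize Theorem \ref{ThPCId} as a concrete matrix transcription of the abstract Definition \ref{DefPCofI}, effected through the Fundamental Lemma and the projection representation, and then to verify that minimality of $m$ transfers across this transcription. First I would recall that by Definition \ref{DefPCofI} the statement $\PCId = m$ asserts that $m$ is the smallest order for which an associated polynomial $f \in P(m,d)$ exists, the matrix of RRs attains maximal rank in $m$, and some associated $f$ produces a unique inverse. Since for a scalar sequence the RR equations (\ref{RRincoeff}) written in the coefficients of $f$ are precisely the linear system (\ref{eqnRR}) with coefficient matrix the Hankel matrix $H(m,d)(S)$ of Definition \ref{HankelMatrix}, the maximal rank requirement of the definition is verbatim the condition (\ref{Maxrank}), yielding condition 2 of the theorem.

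Next I would collapse the existence and unique-inverse requirements of the definition into condition 3. By the Fundamental Lemma (Lemma \ref{FundLemma}), a polynomial $f$ in the form (\ref{PolynforRR}) is an associated polynomial determining a unique inverse if and only if the augmented system (\ref{Constraintonh}), which couples the RR equations with the inversion constraint $<v_h(S), \bar{a}> = 1$, is solvable in $(\bar{a},\bar{b})$. Hence the conjunction of conditions 1 and 3 of the definition is exactly the solvability of (\ref{Constraintonh}), which is condition 3 of the theorem; and the inverse formula is inherited directly from (\ref{Inverse}), equivalently (\ref{TheInverse}) of Theorem \ref{ThInverse}.

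I would then justify condition 1, the counting bound (\ref{Boundonmd}). The system (\ref{eqnRR}) has $M-m$ rows and $n_C = \sum_{i=1}^{d} \binom{m}{i}$ columns, and for these RRs to supply enough independent constraints to pin down the $n_C$ coefficients of $f$ --- the content of the maximal-LI reading of Definition \ref{DefPCofI} --- at least as many equations as unknowns must be available, which is exactly (\ref{Boundonmd}). This is the necessary structural consequence argued in the paragraph preceding the statement.

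Finally I would settle minimality in both directions: if $\PCId = m$ then $m$ is least for the definition's conditions and hence, by the transcription, least for conditions 1--3, while conversely running the transcription backwards shows that the least $m$ satisfying conditions 1--3 equals $\PCId$. The main obstacle I anticipate is precisely this minimality transfer: I must rule out that a smaller order $m' < m$ meets the definition's conditions while failing one of conditions 1--3, which requires checking that the bound (\ref{Boundonmd}) does not spuriously exclude an otherwise admissible smaller order and tracking how $\rank H(m,d)(S)$ evolves with $m$, using Remark \ref{Constancyofeval} that $H(m,d_2)(S)$ contains all columns of $H(m,d_1)(S)$.
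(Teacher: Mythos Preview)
Your proposal is correct and follows essentially the same route as the paper: each of conditions 1--3 is matched to the corresponding clause of Definition~\ref{DefPCofI}, with condition~3 reduced to solvability of (\ref{Constraintonh}) via the Fundamental Lemma and the inverse formula inherited from Theorem~\ref{ThInverse}. Your treatment is in fact more careful than the paper's own proof, which is a terse paragraph that does not explicitly address the minimality transfer or the role of the bound (\ref{Boundonmd}) beyond a one-line remark; your flagged obstacle about (\ref{Boundonmd}) possibly excluding a smaller admissible order is a genuine subtlety the paper simply does not confront.
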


\begin{proof}
    First condition ensures that number of equations in RRs are sufficient to compute the largest number of co-efficients possible in $f$. Then the second condition ensures that $m$ is the smallest such that $H(m,d)$ contains maximal number of LI columns when the entire set of RRs for the sequence $S(M)$ upto the length $M$ is taken into account. The third condition is the necessary and sufficient conditions for $f$ to be an associated polynomial and have a unique inverse. Hence $m$ is the complexity of inversion \PCId. The formula for inverse is same as that in the Theorem \ref{TheInverse} for each solution of $\bar{b}$.
\end{proof}

\begin{remark}
     When $m$, $d$ are such that $H(m,d)(S)$ has maximal rank $r(m)$, unless the consistency of equations (\ref{Constraintonh}) holds the associated polynomial does not determine a unique inverse. Hence the complexity of inversion is decided by the solvability of the combined system (\ref{Constraintonh}). When the equations (\ref{Constraintonh}) have a solution the last constraint (\ref{condonh}) may be LI of the equations of RRs (\ref{eqnRR}). Hence the rank of the system can increase to $r(m)+1$ for a scalar sequence.
\end{remark}

\subsection{Inversion of vector valued sequences}
For a vector valued sequence $S(M)$ over $\ftwo^n$, $n>1$, there are $n$ co-ordinate sequences $S(i,M)$ for $i=1,2,\ldots,n$. The inverse of $S(M)$ is then the $n$-tuple $S_{(-1)}$ such that each of its co-ordinates $s_{(-1)}(i)$ is the inverse of each of these co-ordinate sequences defined by a common polynomial $f$ which satisfies the conditions of Theorem \ref{TheInverse} for each $i$. The \PCId\ for a vector sequence follows from extending the conditions of Theorem \ref{ThPCId} for the ensemble of co-ordinate sequences $S(i)$.

\subsubsection{Notations and theorem for inversion of a vector valued sequence}\label{NotationforVectorcase}
The vector case of the inversion problem brings in many complications. Notations for the new matrices involved in the analogous rank conditions as in Theorem \ref{ThInverse} for the scalar sequence will need to be introduced to explain the vector case of inversion. Let $S(M)$ be a vector valued sequence over $\ftwo^n$ and $S(i)$ denote the $i$-th co-ordinate sequence. If $S(M)$ has \PCId equal to $m$, then according to the definition \ref{DefPCofI} an associated polynomial $f$ in $m$ variables of degree $d$ exists such that $m$ is smallest and $f$ determines a unique inverse of each co-ordinate sequence in the equation (\ref{InverseRelationwithcoeff}). We shall continue to denote the vector sequence of length $M$ as $S(M)$ and the co-ordinate sequences as $S(i)$ for $i$-th sequence. Consider following notations (vectors are considered as columns by default except when explicitly stated as row vectors):
\begin{enumerate}
    \item For each $i=1,\ldots,n$, the Hankel matrix $H(m,d)(S(i))$ is denoted $H(i)=[H_1(i),H_2(i)]$
    \item Matrix 
    \[
    \hat{H}(m,d)(S)=\barr{l}H(1)\\\vdots\\H(n)\earr
    \]
    where
    \[
    \hat{H}(1)=
    \barr{l} H_1(1)\\\vdots\\H_1(n)\earr
    \mbox{   }
    \hat{H}_2= 
    \barr{l}H_2(1)\\\vdots\\H_2(n)\earr
    \]
    \item For $S(i)$ the vector $\hat{s}(i)=[s_m(i),\ldots,s_{(M-1)}]^T$ and the vector 
    \[
    \hat{s}=[\hat{s}(1),\ldots,\hat{s}(2)]^T
    \]
    \item $\hat{1}=[1,1,\ldots,1]^T$ an $n$-tuple.
    \item For a function $f$ of $m$ variables with co-efficients $\bar{a},\bar{b}$, $v_h(i)$ denotes the row vector $v_h(S(i))$, $v_g(i)$ denotes the row vector $v_g(S(i))$. Then denote matrices
    \[
    V_h= 
    \barr{l}v_h(1)(S(1))\\\vdots\\v_h(n)(S(n))\earr,
    \mbox{   }
    V_g=
    \barr{l}v_g(1)(S(1))\\\vdots\\v_g(n)(S(n))\earr
    \]
    \item For each $i$, there exists an elementary matrix $E_i$ such that
    \[
    E(i)
    \barr{ll}H_1(i) & H_2(i)\\v_h(i) & 0\earr=
    \barr{ll}H_{11}(i) & H_{12}(i)\\0 & H_{22}(i)\earr
    \]
    where $H_{11}(i)$ is full row rank. Further,
    \[
    E(i)\barr{l}\hat{s}(i)\\1\earr=
    \barr{l}\hat{s}_1(i)\\\hat{s}_2(i)\earr
    \]
    \item There exists an elementary matrix $E$ such that
    \[
    E
    \barr{ll}\hat{H}_1 & \hat{H}_2\\V_h & 0\earr=
    \barr{ll}\hat{H}_{11} & \hat{H}_{12}\\0 & \hat{H}_{22}\earr
    \]
    where $\hat{H}_{11}$ is full row rank and further,
    \[
    E\barr{l}\hat{s}\\\hat{1}\earr=
    \barr{l}\hat{s}_1\\\hat{s}_2\earr
    \]
\end{enumerate}

We can now state the analogue of Theorem \ref{ThInverse} for conditions on invertibility of a vector valued sequence and existence of associated polynomials.

\begin{theorem}[Inversion of a vector sequence]\label{ThVectorInversion}
Let $S(M)$ be a vector sequence and let the number of variables $m$ and largest degree $d$ of polynomials associated with RRs of a vector sequence $S(M)$ be fixed. Then a unique inverse $S_{(-1)}$ exists iff there exist an $f$ in the set of polynomials $F(m,d)$ satisfying following conditions.

There exist associated polynomials $f$ in $P(m,d)$ for the vector sequence $S(M)$, defined by parameters $\bar{a}$, $\bar{b}$ in representation (\ref{PolynforRR}) and determine a unique inverse $S_{-1}$ iff the existence condition
    \beq\label{VecExistencecond}
    \rank \hat{H}_{22}=\rank[\hat{H}_{22},\hat{s}_2]
    \eeq
holds.

All associated polynomials $f$ with the inverse have co-efficients $\bar{a}$, $\bar{b}$ as solutions of
\beq\label{Solnsofcoeff}
\barr{ll}\hat{H}_{11} & \hat{H}_{12}\\0 & \hat{H}_{22}\earr
\barr{l}\bar{a}\\\bar{b}\earr
=
\barr{l}\hat{s}_1\\\hat{s}_2\earr
\eeq
For any one solution $\bar{b}$ of the above equation, the unique inverse for the associated polynomial $f$ with co-efficients $\bar{a}$, $\bar{b}$ is give by
\beq\label{VecInverse}
S_{-1}=S_{(m-1)}+V_g\bar{b}
\eeq
Further, all associated polynomials have a common inverse iff the condition
\beq\label{VecConstancycond}
\rank \hat{H}_{22}(S)=\rank
    \barr{c}\hat{H}_{22}(S)\\V_g(S)\earr
\eeq
holds.
\end{theorem}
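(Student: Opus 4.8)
The plan is to reduce the vector case entirely to the scalar machinery of Lemma \ref{FundLemma}, Lemma \ref{LemProjRep} and Theorems \ref{ThInverse}, \ref{ThCommonInverse}, exploiting that a single polynomial $f$, hence a single coefficient pair $(\bar a,\bar b)$, must serve all $n$ coordinate sequences simultaneously. First I would record the per-coordinate characterization: by the Fundamental Lemma applied to each $S(i)$, the polynomial $f$ with coefficients $(\bar a,\bar b)$ is associated to $S(i)$ and determines a unique inverse of that coordinate iff $H(i)[\bar a;\bar b]=\hat s(i)$ together with $\langle v_h(i),\bar a\rangle=1$. Since $(\bar a,\bar b)$ is common across coordinates, stacking these $n$ pairs of conditions is precisely the single combined system $[\hat H_1,\hat H_2;V_h,0][\bar a;\bar b]=[\hat s;\hat 1]$, in which the right-hand block $\hat 1$ encodes the requirement that the unique-inverse constraint hold in every coordinate. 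Thus ``$f$ is associated to $S(M)$ with a unique inverse in each coordinate'' is equivalent to solvability of this combined system.

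Second, I would apply the elementary transformation $E$ of item~7 of the notation to put the combined matrix in the block echelon form $[\hat H_{11},\hat H_{12};0,\hat H_{22}]$ with $\hat H_{11}$ of full row rank, and correspondingly $E[\hat s;\hat 1]=[\hat s_1;\hat s_2]$. Because $E$ is invertible, the combined system is solvable iff $[\hat H_{11},\hat H_{12};0,\hat H_{22}][\bar a;\bar b]=[\hat s_1;\hat s_2]$ is solvable. Reading the bottom block row, solvability forces $\hat H_{22}\bar b=\hat s_2$, whose consistency is exactly the Rouch\'{e}--Capelli condition $\rank\hat H_{22}=\rank[\hat H_{22},\hat s_2]$, i.e.\ (\ref{VecExistencecond}); conversely, once a solution $\bar b$ of $\hat H_{22}\bar b=\hat s_2$ is fixed, full row rank of $\hat H_{11}$ guarantees a compatible $\bar a$ for the top block row. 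This is the vector analogue of Lemma \ref{LemProjRep}, and it simultaneously yields (\ref{Solnsofcoeff}) as the description of all admissible $(\bar a,\bar b)$. The inverse formula (\ref{VecInverse}) then follows by stacking the scalar formula (\ref{TheInverse}): for each $i$ the inverse of $S(i)$ equals $s_{(m-1)}(i)+\langle v_g(i),\bar b\rangle$, and assembling these $n$ scalars into a column is exactly $S_{-1}=S_{(m-1)}+V_g\bar b$.

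Finally, for the common-inverse claim I would argue as in Theorem \ref{ThCommonInverse}. The set of all admissible $\bar b$ is the coset $b_0+\ker\hat H_{22}$ for a particular solution $b_0$, so the inverse $S_{-1}=S_{(m-1)}+V_g\bar b$ is the same vector for every associated $f$ iff $V_g\bar b$ is constant on this coset, i.e.\ iff $V_g x=0$ for all $x\in\ker\hat H_{22}$. Over $\ftwo$ this inclusion $\ker\hat H_{22}\subseteq\ker V_g$ is equivalent to the row space of $V_g$ lying in the row space of $\hat H_{22}$, which is precisely the rank condition (\ref{VecConstancycond}).

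I expect the only genuinely delicate point to be bookkeeping rather than a conceptual obstacle: one must check that the shared coefficient vector $(\bar a,\bar b)$ is handled consistently when the per-coordinate systems are stacked, in particular that the unique-inverse constraints contribute the full vector $\hat 1$ (one entry per coordinate) and not a single scalar, and that in the common-inverse step $V_g\bar b$ is treated as a vector, so that constancy means every coordinate of $V_g x$ vanishes (the full matrix inclusion) rather than the vanishing of a single linear functional. Once these are set up correctly the three invertibility statements follow from the scalar results without further work.
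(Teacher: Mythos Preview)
Your proposal is correct and follows essentially the same approach as the paper: reduce to the scalar case by stacking the per-coordinate constraints from Lemma~\ref{FundLemma} into the combined system, apply the elementary reduction $E$ of item~7 to obtain the block form, and then invoke the scalar arguments of Lemma~\ref{LemProjRep} and Theorems~\ref{ThInverse}, \ref{ThCommonInverse} componentwise. In fact your write-up is considerably more detailed than the paper's own proof, which simply asserts that (\ref{VecExistencecond}) and (\ref{VecConstancycond}) are the vector analogues of the scalar conditions without spelling out the stacking or the coset argument for $V_g$.
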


\begin{proof}
    The existence condition (\ref{VecExistencecond}) is the necessary and sufficient condition for existence of a polynomial $f$ in $P(m,d)$ to satisfy RRs for all co-ordinate sequences $S(i)$ and existence of inverse for each of these sequences for $i=1,2,\ldots,n$. The inverse is given by the formula (\ref{VecInverse}). Then the uniqueness condition (\ref{VecConstancycond}) is the necessary and sufficient condition for all these polynomials $f$ which satisfy the existence conditions to be the equivalence class corresponding to a unique inverse for each of the co-ordinate sequences $S(i)$.  
\end{proof}

\subsubsection{\PCId\ for a vector sequence}
Computing the complexity \PCId\ of a vector valued sequence $S(M)=\{S(i)(M)\}$ is obtained by extending the Theorem \ref{ThPCId} for a single sequence. Using the notations developed in section \ref{NotationforVectorcase} above for Theorem \ref{ThVectorInversion} it follows that an associated polynomial $f$ for $S(M)$ satisfies the RRs
\[
\hat{H}_1\bar{a}+\hat{H}_2\bar{b}=\hat{s}
\]
and for unique inversion $S(i)(-1)$ to exist for each co-ordinate sequence the equations
\[
V_h\bar{a}=\bar{1}
\]
must be consistent with the RRs. The order $m$ is the \PCId\ for $S(M)$ by definition when the maximum number of LI RRs are captured at $m$ for an appropriate $d$. Moreover the bound between the sequence length $M$ and the order $m$, degree $d$ pair defined by the equation (\ref{Boundonmd}) for each co-ordinate sequence also holds. Since the number of columns $n_C$ of the Hankel matrix $H(S(i))$ does not depend on $i$ or $S(i)$ but only depends on $m,d$, this bound is still the same as in (\ref{Boundonmd}). These considerations prove the following 

\begin{theorem}\label{ThPCIdvector}
    A vector sequence $S(M)=\{S(i)\}$ has \PCId\ of $m$ at a fixed degree $d\leq m$ iff $m$ is the smallest number such that the bound (\ref{Boundonmd}) holds where $n_C$ is the number of columns of the Hankel matrix 
    \[
    \hat{H}(m,d)(S)=\barr{l}H(1)\\\vdots\\H(n)\earr
    \]
    \begin{enumerate}
        \item The matrix $\hat{H}(m,d)(S)$ has maximal rank with respect to $m$ i.e.
        \beq\label{VecMaxrank}
        \rank \hat{H}(m,d)(S)=\max_{k\geq d}\rank \hat{H}(k,d)(S)
        \eeq
        \item The condition (\ref{VecExistencecond}) for existence of associated polynomials $f$ with unique inverse holds.  
        \end{enumerate}
    The unique inverse is determined by associated polynomials $f$ as 
    \[
    S_{(-1)}=S_{(m-1)}+V_g\bar{b}
    \] 
    where $f$ determined by solutions $\bar{a}$, $\bar{b}$ of (\ref{VecExistencecond}).
\end{theorem}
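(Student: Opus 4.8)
The plan is to establish this as the vector analogue of Theorem \ref{ThPCId}, obtaining it by combining Definition \ref{DefPCofI} with the inversion characterization already proved in Theorem \ref{ThVectorInversion}. First I would unpack Definition \ref{DefPCofI} for the vector case: it declares $\PCId$ to be the smallest $m$ (with $d\leq m$) for which (i) there exists an associated polynomial $f\in P(m,d)$ satisfying the RRs (\ref{RRincoeff}) for every coordinate sequence $S(i)$, (ii) the stacked Hankel matrix $\hat{H}(m,d)(S)$ attains maximal rank with respect to $m$, and (iii) there is such an $f$ yielding a unique inverse $s_{(-1)}(i)$ in each coordinate via (\ref{InverseRelationwithcoeff}). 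The whole proof then amounts to matching these three definitional requirements against the three matrix conditions in the statement.

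Next I would handle the bound. Since the number of RR equations contributed per coordinate is $M-m$ and the number of monomial coefficients $n_C=\sum_{i=1}^{d}\binom{m}{i}$ depends only on $m$ and $d$ and not on the coordinate index, the bound (\ref{Boundonmd}) is uniform across all $S(i)$; it guarantees that up to length $M$ there are enough independent relations for the column rank of $\hat{H}(m,d)(S)$ to reach its maximal value, so that the maximal-rank condition (\ref{VecMaxrank}) is meaningful at that length. With the bound in force, requirements (i) and (iii) of the definition—existence of an associated $f$ satisfying all coordinate RRs together with unique solvability of the inverse in each coordinate—are precisely the hypotheses of Theorem \ref{ThVectorInversion}, and are therefore equivalent to the existence condition (\ref{VecExistencecond}), namely $\rank \hat{H}_{22}=\rank[\hat{H}_{22},\hat{s}_2]$. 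Likewise requirement (ii) is exactly (\ref{VecMaxrank}). The inverse formula $S_{(-1)}=S_{(m-1)}+V_g\bar{b}$ is then inherited verbatim from Theorem \ref{ThVectorInversion}, valid for any solution $\bar{b}$ of the reduced system (\ref{Solnsofcoeff}).

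The step I expect to be the main obstacle is justifying that the definitional minimality over $m$ coincides with the maximal-rank condition (\ref{VecMaxrank}) in the vector setting. In the scalar case this rests on the fact that maximal rank of $H(m,d)(S)$ captures all the linearly independent recurrence relations; for the vector case I would argue that passing from one coordinate to the ensemble only appends rows to form $\hat{H}(m,d)(S)=[H(1)^T,\ldots,H(n)^T]^T$, so the joint column rank—which governs how many coefficients of $f$ are simultaneously pinned down across all coordinates—is monotone and stabilizes at the smallest $m$ where $\hat{H}(m,d)(S)$ achieves $\max_{k\geq d}\rank \hat{H}(k,d)(S)$. I would verify that at any smaller order either the rank is not yet maximal (so further linearly independent RRs remain uncaptured) or the existence condition (\ref{VecExistencecond}) fails to be simultaneously consistent across coordinates, and that at $m$ both hold simultaneously; this is exactly the content of choosing $m$ smallest in the definition. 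Once this alignment is established, the equivalence is immediate and the inverse formula follows from the previous theorem.
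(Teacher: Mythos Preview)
Your proposal is correct and mirrors the paper's own argument: the paper also derives this theorem by extending the scalar Theorem \ref{ThPCId} to the ensemble of coordinate sequences, invoking the notations and the existence/inverse characterization of Theorem \ref{ThVectorInversion}, and noting that the bound (\ref{Boundonmd}) is uniform across coordinates since $n_C$ depends only on $m,d$. Your write-up is in fact more detailed than the paper's (which dispatches the result in the short paragraph preceding the theorem), particularly in spelling out why minimality of $m$ aligns with the maximal-rank condition (\ref{VecMaxrank}); the paper simply asserts this holds ``by definition when the maximum number of LI RRs are captured at $m$.''
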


\subsection{Numerical examples to illustrate inversion of vector sequences}
We shall consider three examples below to illustrate the sequence inversion starting from a scalar sequence to vector sequences.

\begin{example}
    Consider the scalar sequence $S=\{0,1,1,1,0,0,0\}$. The MOC is $\geq 3$ since there are two subsequences $\{1,1,1\}$ and $\{1,1,0\}$ with different suffixes. Let $m=3$ be chosen as MOC. The associated polynomial $f(X_0,X_1,X_2)$ is represented in the special form $X_0h(X_1,X_2)+g(X_1,X_2)$ and functions $h$ and $g$ are found from the conditions on these functions to satisfy RRs (\ref{RRincoeff}) and (\ref{InverseRelationwithcoeff}). The RRs are
    \[
    \begin{array}{lclcl}
    0*h(1,1) & + & g(1,1) & = & 1 \\
    1*h(1,1) & + & g(1,1) & = & 0 \\
    1*h(1,0) & + & g(1,0) & = & 0 \\
    1*h(0,0) & + & g(0,0) & = & 0 \\
    \end{array}
    \]
    The inverse relation (\ref{InverseRelationwithcoeff}) is
    \[
    s_{(-1)}h(0,1)+g(0,1)=1
    \]
    From the RRs we get conditions for Truth Tables of $h$ and $g$;
    \[
    \begin{array}{l}
    g(1,1)=1, g(1,0)=D, g(0,0)=D, g(0,1)=D\\
    h(1,1)=1, h(0,1)=1, h(1,0)=g(1,0), h(0,0)=g(0,0).
    \end{array}
    \]
    where $D$ denotes free Boolean value. Let the function $h(X_1,X_2)=a_0+a_1X_1+a_2X_2+a_{12}X_1X_2$. Choosing $g(X_1,X_2)=X_1X_2$ satisfies the truth table conditions. The conditions on co-effcients of $h$ are then
    \[
    a_0+a_1=0,a_0+a_2=1,a_0=0,a_0+a_1+a_2+a_{12}=1
    \]
    These have solution $a_0=a_1=0,a_2=1,a_{12}=0$. Hence $h(X_1,X_2)=X_2$. Hence one associated polynomial for $S$ is
    \[
    f(X_0,X_1,X_2)=X_0X_2+X_1X_2
    \]
    The inverse for this $f$ is $s_{(-1)}=g(0,1)+s_2=0+1=1$
\end{example}

\begin{example}
In this example we have a vector sequence of $2$-tuples. The co-ordinate sequence $S(1)$ is same as in previous example. Hence the RRs and inversion equations are same for truth table values of $h$ and $g$ for $S(1)$. The sequence is 
\[
S=\{
\barr{l}S(1)\\S(2)\earr
\}=
\{
\barr{l}0\\1\earr,\barr{l}1\\1\earr,\barr{l}1\\1\earr,\barr{l}1\\0\earr,
\barr{l}0\\0\earr,\barr{l}0\\0\earr,\barr{l}0\\1\earr
\}
\]
The value of MOC continues to be $m=3$ because $S(2)$ has sub-sequences $\{0,0,0\}$ and $\{0,0,1\}$ with different suffixes. Hence we take the associated polynomial with three variables. However note that a weakly homogenous associated polynomial $f$ cannot satify RRs for $S(2)$ because for such an $f$ the last RR will be $1=f(0,0,0)=0$ which will be contradictory. Hence we consider a non-homogeneous associated polynomial for this vector sequence $S$ in the form
\[
f(X_0,X_1,X_2)=X_0h(X_1,X_2)+g(X_1,X_2)
\]
such that $g(0,0)=1$. The RRs for $S(2)$ are
\[
\begin{array}{lclcl}
1*h(1,1) & + & g(1,1) & = & 0\\
1*h(1,0) & + & g(1,0) & = & 0\\
1*h(0,0) & + & g(0,0) & = & 0\\
0*h(0,0) & + & g(0,0) & = & 1
\end{array}
\]
Hence we get $g(0,0)=1$ and this implies $h(0,0)=1$. Further $h(1,0)=g(1,0)$ and $h(1,1)=g(1,1)$. The inverse relation gives the constraint 
\[
S(2)_{(-1)}h(1,1)+g(1,1)=1
\]
Since $h(1,1)=1$ for uniqueness of inverse, $S(2)_{(-1)}=g(1,1)+1$. From RRs of $S(1)$ we have $g(1,1)=1$, hence $S(2)_{(-1)}=0$. The constraint $h(1,0)=g(1,0)$ appears in both RRs hence $g(1,0)$ is free. Similarly $g(0,1)$ is free. Hence the truth tables for $h$ and $g$ for satisfying RRs of the vector sequence and inversion condition are
\[
\begin{array}{lll}
(X_1,X_2) & h(X_1,X_2) & g(X_1,X_2)\\
\hline
(0,0) & 1 & 1\\
(1,0) & g(1,0) & D\\
(0,1) & 1 & D\\
(1,1) & 1 & 1
\end{array}
\]
Condsider the forms $h=a_0+a_1X_1+a_2X_2+a_{12}X_1X_2$ and $g=b_0+b_1X_1+b_2X_2+b_{12}X_1,X_2$. Then from the truth tables $a_0=1=b_0$. This implies $a_1+a_2+a_{12}=0$, $b_1+b_2+b_{12}=0$. From condition at $(1,0)$ and $h(1,1)+g(1,1)=0$ it follows that $a_1+b_1=0$, $a_2+a_{12}=0$, $b_2+b_{12}=0$. Hence the only choices possible are $a_2=b_2=1$ or $a_2=b_2=0$, $a_1=b_1=0$, $a_{12}=b_{12}=1$ or $a_{12}=b_{12}=0$. These give two non-homogenous associated polynomials
\[
\begin{array}{lcll}
f_1 & = & X_0+1 & \mbox{for }h(X_1,X_2)=g(X_1,X_2)=1\\
f_2 & = & X_0(1+X_2+X_1X_2)+(1+X_2+X_1X_2) & \mbox{for other parameters}
\end{array}
\]
The inverse of $S$ for $f_1=X_0+1$ is
\[
\barr{l}S(1)_{(-1)}\\S(2)_{(-1)}\earr=\barr{l}1+1=0\\1+1=0\earr
\]
The inverse for $f_2=X_0(1+X_2+X_1X_2)+(1+X_2+X_1X_2)$ is also
\[
\barr{l}S(1)_{(-1)}\\S(2)_{(-1)}\earr=\barr{l}1+1=0\\1+1=0\earr
\]
This happens because the terms $s_1$, $s_2$ in both co-ordinate sequences are the same.
\end{example}
\section{Complexity of inversion relative to a monomial set and associated polynomials with special structure}
The notion of inverse of a sequence arises primarily from periodic sequences where the prefix (i.e. the inverse) is the last element of the sequence. Such an inverse is defined by the periodicity RR $s_{(N+k)}=s_{k}$ for $k=0,1,2,\ldots,(N-1)$ which is of order $m=N$, degree $d=1$ and the associated polynomial $f(X_0,\ldots,X_{(N-1)})=X_{(N-1)}-X_{0}$. In this paper we have extended the notion of inverse with respect to associated polynomials of higher degrees $d>1$ in the class of polynomials $P(m,d)$ where $1\leq d\leq m$. Under the assumption that the sequence $S(M)$ of length $M$ is a partial sequence of a periodic sequence of (some unknown) length $N$, the periodicity RR gives an associated polynomial with order larger than $M$, hence existence of associated polynomials is guaranteed due to the periodicity assumption. 

In previous section conditions for polynomials in $P(m,d)$ to satisfy RRs (or serve as associated polynomials) and determine an inverse are established. Further, conditions to determine the complexity of inversion \PCId\ of sequences are developed for general polynomials in $P(m,d)$. However $P(m,d)$ contains many polynomials with special structures such as linear, homogeneous, symmetric with permutation symmetries for restricted variables even for same $m$ and $d$.
These polynomials with special structure have monomials which belong to a special class among all monomials in $P(m,d)$ and are chosen a-priori. Hence it is appropriate to consider distinct Hankel matrices when these monomials are evaluated over the sequence. Let $\cal{M}$ denote a subset of monomials in $P(m,d)$. Considering an ordering $x_0<x_1<x_2\ldots x_{(n-1)}$ of variables, the set of monomials $m(i_1,i_2,\ldots,i_k)=x_{i_1}\ldots x_{i_k}$ with the convention $i_1<\ldots<i_k$ in $P(m,d)$ is also ordered as  
\[
m(i_1,i_2,\ldots,i_k)\leq m(j_1,j_2,\ldots,j_l)
\]
when $k\leq l$ if $i_1\leq j_1,\ldots,i_k\leq j_k$. Otherwise 
\[
m(i_1,i_2,\ldots,i_k)> m(j_1,j_2,\ldots,j_l)
\]
if $k>l$.

The problem addressed in this section is, given a subset of monomials $\cal{M}$ in $P(m,d)$ what are the associated polynomials $f$ defined by linear forms over $\cal{M}$, when do they determine an inverse and what is an appropriate notion of the complexity of inversion. Since we hypothesise the complexity to depend on $\cal{M}$ chosen, we shall denote this as $C(M)(S)$ for a sequence $S$. Thus in this section the conditions for existence of inverse are determined with respect to associated polynomials which are linear forms of monomials in $M$.

\subsection{Hankel matrix of $\cal{M}$}
For a monomial $m(i_1,i_2,\ldots,i_k)$, its evaluation over a scalar sequence $S(M)$ is defined as the sequence of values
\[
\{m(i_1,i_2,\ldots,i_k)(\sigma^jS)\}=\{m(s_{i_1+j}s_{i_2+j}\ldots s_{i_k+j}),j=0,1,2,\ldots\}
\]
Considering the representation of the associated polynomials as in (\ref{PolynforRR}) and using the monomial representation of polynomials $h$ and $g$ in this representation, the co-efficients of a general $f$ which is a linear form over $\cal{M}$ are classified in two types denoted by vectors $\bar{a}$ (co-effcients of $f$) and $\bar{b}$ (co-effcients of $g$). There are then two types of monomials in $\cal{M}$, one which are multiples of $X_0=m(0)$ and others which do not have $X_0$ as a factor. Hence any $f$ with monomials in the set ${\cal M}$ is of the form
\beq\label{PolyfunM}
f=m(0)(a_0+\sum_{i}a_im_i)+(\sum_{j}b_jm_j)
\eeq
such that monomials $m(0)=X_0$, $m(0)m_i$ and $m_j$ belong to $\cal{M}$.
\[
\begin{array}{ll}
f(X_0,X_1,\ldots,X_{(m-1)})= & m(0)[a_0+\sum_{1:(m-1)}a_{(1:(m-1))}m_{(1:(m-1)}]+\\
& [\sum_{1:(m-1)}b_{(1:(m-1))}m_{(1:(m-1))}]
\end{array}
\]
where $1:(m-1)$ denotes the indices $1\leq i_1<i_2\ldots<i_{r}\leq (m-1)$ and $a_{(1:(m-1))}$ is the co-efficient of the monomial $m_{(1:(m-1))}$ in the summation expression. 

Given such set $\cal{M}$ of monomials we define a Hankel matrix of evaluation of $\cal{M}$ at $S$ to be the matrix
\beq\label{HankelmatrixofM}
\begin{array}{lcl}
H({\cal M})(S) & = &
\barr{llllll}
s_0a_0 & \ldots & s_0m_i(S) & \ldots & m_j(S) & \ldots\\
s_1a_0 & \ldots & s_1m_i(\sigma S) & \ldots & m_j(\sigma S) & \ldots\\
\vdots & \vdots & \vdots & \vdots & \vdots & \vdots\\
s_{(M-m-1)}a_0 & \ldots & s_{(M-m-1)}m_i(\sigma^{(M-m-1)}S) & \ldots & m_j(\sigma^{(M-m-1)}S) & \ldots
\earr
\\
 & = & [H_1(S),H_2(S)]
\end{array}
\eeq
Where $H_1(S)$ is the submatrix of monomial evaluations of the type $m(0)m_i(\sigma^jS)$ and $H_2(\sigma^jS)$ that of evaluations $m_j(\sigma^kS)$.
An associated polynomial $f$ which is a linear form over monomials in ${\cal M}$ can be expressed in the form (\ref{PolyfunM}) with two sets of coefficients $\bar{a}$ and $\bar{b}$. The columns of $H({\cal M})(S)$ are ordered in the order of the monomial terms in $f$ in the form (\ref{PolyfunM}) as in the expression $[H_1(S),H_2(S)]$ of $H({\cal M})(S)$. Further the evaluation of monomial terms $m_i$ and $m_j$ in the expression of $f$ are grouped in two vectors denoted as
\[
\begin{array}{lcl}
\bar{m}_1(S) & = & [m_i(S)]\\
\bar{m}_2(S) & = & [m_j(S)]
\end{array}
\]
If $S(M)=\{S_i(M)\},i=1,2,\ldots,n\}$ is a vector sequence where each co-ordinate sequence is denoted as
\[
S_i(M)=\{s^i_0,s^i_1,\ldots,s^i_{(M-1)}\}
\]
then the evaluation of a monomial $m(i_1,i_2,\ldots,i_k)$ over $S$ is the vector sequence
\[
\{\bar{m}(i_1,i_2,\ldots,i_k)(j)\}=\{\bar{m}(s^i_{i_1+j}s^i_{i_2+j}\ldots s^i_{i_k+j}),j=0,1,2,\ldots,(M-1-i_k)\}
\]
for $i=1,2,\ldots,n$. Then the Hankel matrix of ${\cal M}$ relative to $S(M)$ is the matrix of vector valued evaluations of $m_i$ and $m_j$ over the vector sequence $S(M)$ and that of $m(0)$ as the vector sequence $s^i_k, k=0,1,\ldots,(M-m-1)$. The Hankel matrix of ${\cal M}$ is still denoted as $H({\cal M})(S)$ while the evaluations of monomials $m_i$ and $m_j$ in expression of evaluation of $f$ over $S(M)$ are denoted as
\[
\begin{array}{lcl}
M_1(S) & = & [m_i(S)]\\M_2(S) & = & [m_j(S)]
\end{array}
\]

\subsubsection{Illustrative example for a Hankel matrix}
Let ${\cal M}=\{m_0,m_1,m_2,m_3\}=\{X_0X_1,X_0X_2,X_1X_2,X_2X_3\}$ and a sequence is given a 
\[S=\{s_0,s_1,s_2,s_3,s_4,s_5,s_6,s_7\}
\]
Then the order $m=4$ and $M=8$. Hence maximum shift of $S$ is possible at the power $M-m-1$ by operator $\sigma^{3}$. Then the evaluations of monomials is given by
\[
\begin{array}{llllll}
m_0(S) & = & s_0s_1 & m_0(\sigma S) & = & s_1s_2\\
m_1(S) & = & s_0s_2 & m_1(\sigma S) & = & s_1s_3\\
m_2(S) & = & s_1s_2 & m_2(\sigma S) & = & s_2s_3\\
m_3(S) & = & s_2s_3 & m_3(\sigma S) & = & s_3s_4\\
m_0(\sigma^2 S) & = & s_2s_3 & m_0(\sigma^3 S) & = & s_3s_4\\
m_1(\sigma^2 S) & = & s_2s_4 & m_1(\sigma^3 S) & = & s_3s_5\\
m_2(\sigma^2 S) & = & s_3s_4 & m_2(\sigma^3 S) & = & s_4s_5\\
m_3(\sigma^2 S) & = & s_4s_5 & m_3(\sigma^3 S) & = & s_5s_6
\end{array}
\]
The Hankel matrix of ${\cal M}$ over $S$ is then given by
\[
\begin{array}{ll}
H({\cal M})(S)=[H_1(S)|H_2(S)]=&
\barr{ll|ll}
m_0(S) & m_1(S) & m_2(S) & m_3(S)\\
m_0(\sigma S) & m_1(\sigma S) & m_2(\sigma S) & m_3(\sigma S)\\
m_0(\sigma^2 S) & m_1(\sigma^2 S) & m_2(\sigma^2 S) & m_3(\sigma^2 S)\\
m_0(\sigma^3 S) & m_1(\sigma^3 S) & m_2(\sigma^3 S) & m_3(\sigma^3 S)
\earr
\end{array}
\]
Hence it follows that unless the sequence length $M$ is sufficiently long, the number of column $n_C$ of the Hankel matrix may not be smaller than the number of rows $M-m$. In realistic situations the sequence length is always sufficiently large so that we have more equations in RRs than the number of monomials. 

\subsection{Complexity of Inversion relative to ${\cal M}$}
First we need to define the notion of complexity of inversion of a sequence $S$ using RRs defined by an associated polynomial $f$ which is a linear form over an a-priori fixed set of monomials ${\cal M}$. Then we state the conditions on existence of such an associated polynomial defining an inverse for the sequence $S$. Complexity of inversion \PCId as defined earlier, is the order of the polynomials satisfying the RRs for the sequence upto the given length which reflect maximal number of linearly independent equations satisfied by the co-efficients of the associated polynomials (also the rank of the Hankel matrix $H(m,d)$) and also satisfying the inversion relation. 

Let a subset of monomials ${\cal M}$ is chosen from the set of all monomials in $P(m,d)$ and $m$ is the maximal number of variables in the monomials of ${\cal M}$. The set ${\cal M}$ is fixed for solving the inversion problem. Let $H({\cal M}(S))$ be the Hankel matrix of evaluation of ${\cal M}$ at $S$ defined in (\ref{HankelmatrixofM}) upto the sequence length $M$. Since the number of columns $n_C$ of $H({\cal M})(S)$ is the largest number of co-efficients in the associated polynomials possible, the number of rows of the matrix which is same as the number of RRs is at least as much as $n_C$. Hence we have the bound (\ref{Boundonmd})
\[
M-m\geq n_C
\]
This bound makes the sequence length sufficiently long to capture all the LI RRs for the sequence $S$. We shall hereafter always assume that this bound is satisfied for any choice of a monomial set ${\cal M}$ used for inversion.

\begin{definition}[Complexity of Inversion relative to ${\cal M}$]
Let $m$ be the largest number of variables in any monomials in ${\cal M}$ (called as order of ${\cal M}$ and $r$ be the rank of the Hankel matrix $H({\cal M}(S))$ for the sequence $S(M)$ given upto the length $M$ and the system of equations
\[
H({\cal M}(S))
\barr{c}\bar{a}\\\bar{b}\earr
=
\barr{c}\bar{s}\\1\earr
\]   
has a solution. Then $C({\cal M})=\sqrt{rm}$ is defined as the Complexity of Inversion of $S$  relative to ${\cal M}$. If this system does not have a solution then $C({\cal M})$ is undefined. 

For a vector sequence $S(M)=\{S_i(M)\},i=1,2,\ldots,n\}$, 
\end{definition}

The rationale behind the above definition is that when the system of equations (\ref{RRincoeff}) and condition for inversion (\ref{InverseRelationwithcoeff}) are both satisfied by an associated polynomial which is a linear form over $\cal{M}$, the complexity of solving for the inverse is governed by both factors the number of variables $m$ and the maximal rank of the Hankel matrix. 

\begin{theorem}\label{Th:InversionoverM}
Let $S(M)$ be a scalar sequence of length $M$ and ${\cal M}$ be a set of monomials which has finite complexity relative to $S(M)$. Let $m$ be the maximal number of variables in monomials of $\cal{M}$ and $r$ is the rank of $H(\cal{M})(S)$ upto the sequence length $M$. Then there exists a subset ${\cal M}_1\subset\cal{M}$ (whose Hankel matrix is denoted as)
\[
H({\cal M}_1)(S)=[H_{11}(S),H_{12}(S)]
\]
such that the equations
\[
\begin{array}{lcl}
H_{11}(S)\bar{a}+H_{12}(S)\bar{b} & = & \bar{s}\\
<\bar{m}_1(S),\bar{a}> & = & 1
\end{array}
\]
have a unique solution $\bar{a}$, $\bar{b}$ for the co-efficients of an associated polynomial $f$ which is a linear form over ${\cal M}_1$. The inverse of $S$ is given by
\[
s_{(-1)}=s_{(m-1)}+<\bar{m}_2(S),\bar{b}>
\]
\end{theorem}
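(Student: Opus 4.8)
The plan is to reduce the statement to a column-selection argument for the combined linear system of the Fundamental Lemma (Lemma \ref{FundLemma}) and then read off the inverse from that lemma. First I would recall that, by the Fundamental Lemma, a linear form $f$ over the monomials of $\cal{M}$ written in the form (\ref{PolyfunM}) is an associated polynomial of $S(M)$ determining a unique inverse exactly when its coefficient vectors $\bar{a}$, $\bar{b}$ solve the RRs $H_1(S)\bar{a}+H_2(S)\bar{b}=\bar{s}$ together with the inversion constraint $<\bar{m}_1(S),\bar{a}>=1$. The hypothesis that $\cal{M}$ has finite complexity relative to $S(M)$ is precisely the assertion that the combined system
\[
\barr{ll} H_1(S) & H_2(S)\\ \bar{m}_1(S) & 0\earr
\barr{c}\bar{a}\\\bar{b}\earr=\barr{c}\bar{s}\\1\earr
\]
is consistent; call its coefficient matrix $A$ and its right-hand side $\bar{r}$, so that $\bar{r}$ lies in $\col A$.

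Next I would perform the column selection. The columns of $A$ are indexed by the monomials of $\cal{M}$, the $\bar{a}$-columns being the multiples of $m(0)=X_0$ and the $\bar{b}$-columns the remaining monomials. Choose a maximal linearly independent subset of the columns of $A$; these span $\col A$, so $\bar{r}$ still lies in their span, and the subset of monomials indexing them is the desired ${\cal M}_1$. Writing the retained $\bar{a}$- and $\bar{b}$-columns of the Hankel part as $[H_{11}(S),H_{12}(S)]$ and the retained entries of the inversion row accordingly, the restricted system is consistent by construction and its coefficient matrix has full column rank. A consistent linear system whose matrix has full column rank has exactly one solution, so the pair $\bar{a}$, $\bar{b}$ is uniquely determined; this is the unique linear form over ${\cal M}_1$ claimed in the theorem.

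Finally, since the recovered $\bar{a}$, $\bar{b}$ satisfy both the RRs and $<\bar{m}_1(S),\bar{a}>=1$, the Fundamental Lemma (equivalently Theorem \ref{ThInverse}) applies verbatim and yields $s_{(-1)}=s_{(m-1)}+<\bar{m}_2(S),\bar{b}>$, because $<\bar{m}_2(S),\bar{b}>$ is exactly the evaluation $<v_g(S),\bar{b}>$ of the $X_0$-free part $g$ on the initial window of $S$.

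I expect the main obstacle to be bookkeeping rather than depth. One must take the maximal independent set of columns in the full augmented matrix $A$, including the inversion row, and not merely in $H({\cal M})(S)$, so that the inversion constraint is not inadvertently lost when columns are discarded; equivalently one must retain enough $\bar{a}$-columns for $<\bar{m}_1(S),\bar{a}>=1$ to remain satisfiable. This is why the effective rank of the reduced system may be $r$ or $r+1$ according as the inversion row is dependent on or independent of the RRs, matching the earlier remark on the rank of $H(m,d)$ rising to $r(m)+1$, and care is needed to state precisely which columns survive while preserving both the block structure and the consistency of the system.
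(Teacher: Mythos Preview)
The paper does not actually supply a proof of this theorem: immediately after the statement it only announces ``Analogous theorem for inverse of a vector sequence is as follows'' and then leaves Theorem~\ref{Th:InversionoverMvectorseq} empty. So there is no original argument to compare against, and your proposal is in effect filling a gap in the paper.

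Your argument is correct and is the natural one suggested by the surrounding development. Reading ``${\cal M}$ has finite complexity relative to $S(M)$'' as consistency of the combined system $H_1(S)\bar{a}+H_2(S)\bar{b}=\bar{s}$, $<\bar{m}_1(S),\bar{a}>=1$ is exactly what the Definition of $C({\cal M})$ asserts, and extracting ${\cal M}_1$ by selecting a maximal linearly independent set of columns of the \emph{full} coefficient matrix $A$ (RRs together with the inversion row) is the right move: it guarantees that $\bar{r}$ remains in the column span, so the restricted system is still consistent, while full column rank forces uniqueness of $(\bar{a},\bar{b})$. The inverse formula then follows from Lemma~\ref{FundLemma} verbatim, since $<\bar{m}_2(S),\bar{b}>$ is precisely $<v_g(S),\bar{b}>$ for the $X_0$-free part of $f$. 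Your closing caveat is well taken and is the only genuinely delicate point: the independent columns must be chosen in $A$, not merely in $H({\cal M})(S)$, so that the row $[\bar{m}_1(S),0]$ is accounted for and at least one $\bar{a}$-column survives with the constraint $<\bar{m}_1(S),\bar{a}>=1$ still satisfiable; this is also why the size of ${\cal M}_1$ can be $r$ or $r+1$, consistent with the paper's later count $2^{|{\cal M}|-(r+1)}\leq n_f\leq 2^{|{\cal M}|-r}$.

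One small point worth making explicit in your write-up: in the displayed system of the theorem the symbols $\bar{m}_1(S)$ and $\bar{m}_2(S)$ must be read as the restrictions to the monomials retained in ${\cal M}_1$, not as the original vectors over ${\cal M}$. You already note this implicitly, but stating it outright avoids any ambiguity about the dimensions of $\bar{a}$, $\bar{b}$ after column selection.
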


Analogous theorem for inverse of a vector sequence is as follows.

\begin{theorem}\label{Th:InversionoverMvectorseq}
\end{theorem}

\subsection{Inverse using Linear Complexity $C(m,1,L)(S)$}
Let $L(X_0,\ldots,X_{(m-1)})$ denote a linear form $\sum_{i=0}^{(m-1)}a_iX_i$ in $m$ variables. Hence the associated polynomial is $L$ and has fixed degree $d=1$. $L$ must satisfy RRs (\ref{RR}). The first condition on bounds between the length of $S$ and complexity is (\ref{Boundonmd}) is $M\geq 2m$. Assuming for simplicity that $S$ is a scalar sequence, the Hankel matrix of $f$ at $S$ is 
\beq\label{LCHankel}
H(m,1,L)(S)=[H_1|H_2]
\barr{l|llll}
s_0 & s_1 & s_2 & \ldots & s_{(m-1)}\\
s_1 & s_2 & s_3 & \ldots & s_m\\
\vdots & \vdots & \vdots & \ldots & \vdots\\
s_{(M+m-1)} & s_{(M-m)} & s_{(M-m+1)} & \ldots & s_{(M+2m-2)}
\earr
\eeq
Since $d=1$ is fixed let this matrix be denoted as $H(m)$, the first condition of Theorem \ref{ThPCId} is that 
\[
\rank H(S)=m
\]
is maximal with respect to the sequence length $M$. Since $\rank H(S)=m$ equals the number of columns in $H(S)$ there is a unique solution to co-efficients of the linear function $L$. For the linear $f$ the polynomial $h(X_1,\ldots,X_{(m-1)})=a_0$. Hence the second condition of Theorem \ref{ThPCId} fixes the value of $a_0=1$. Other co-effcients $a_1,\ldots,a_{(m-1)}$ are uniquely determined. The function 
\[
g(X_1,\ldots,X_{(m-1)}=\sum_{i=1}^{(m-1)}a_iX_i
\]
hence $v_g(S)=[s_1,s_2,\ldots,s_{(m-1)}]$. The formula for inverse in Theorem \ref{ThPCId} gives
\beq\label{LCinverse}
s_{(-1)}=s_{(m-1)}+\sum_{i=1}^{(m-1)}a_is_i
\eeq

This discussion proves the 

\begin{theorem}\label{ThLC}
For degree $d=1$, an inverse exists for a linear associated polynomial iff there is $m$ the smallest number such that $M\geq 2m$, the matrix
\[
H(m,1,L)(S)=[H_1|H_2]
\]
has maximal column rank which is $m$ and the equations (\ref{Constraintonh}) have a solution. The inverse exists uniquely and is given by (\ref{LCinverse}) while the associated polynomial is
\[
f(X_0,X_1,\ldots,X_{(m-1)}=X_0+\sum a_iX_i
\] 
where the co-efficients $\bar{b}=[a_1,\ldots,a_{(m-1)}]^T$ are solved from the equation
\[
H_1+H_2\bar{b}=\bar{s}
\]
the complexity of inversion $C(m,1,L)(S)=m$.
\end{theorem}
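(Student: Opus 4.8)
The plan is to obtain Theorem \ref{ThLC} as the $d=1$ specialization of the general machinery already established, principally Theorem \ref{ThPCId} and the Fundamental Lemma \ref{FundLemma}, together with the complexity definition relative to a monomial set. Since nearly every ingredient has been proved in full generality, the work reduces to checking that each general hypothesis collapses to the stated linear condition and that the classical theory of the minimal polynomial supplies uniqueness.

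First I would record the monomial count. For $d=1$ the only monomials are $X_0,X_1,\ldots,X_{(m-1)}$, so $n_C=\binom{m}{1}=m$ and the general bound (\ref{Boundonmd}), $M-m\geq n_C$, becomes exactly $M\geq 2m$, giving the first condition. Next I would write the decomposition (\ref{PolynforRR}): for a linear $f=\sum_{i=0}^{(m-1)}a_iX_i$ the factor multiplying $X_0$ is the constant $h=a_0$ and the remainder is $g=\sum_{i=1}^{(m-1)}a_iX_i$. Consequently $\bar{a}=[a_0]$, $\bar{b}=[a_1,\ldots,a_{(m-1)}]^T$, and $v_h(S)=[1]$, so that the inversion constraint (\ref{condonh}), $\langle v_h(S),\bar{a}\rangle=1$, reduces to the single requirement $a_0=1$. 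This identifies the second condition of the Fundamental Lemma with the statement $a_0=1$ highlighted in the theorem.

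The substantive step is the rank analysis. The matrix $H(m,1,L)(S)=[H_1|H_2]$ of (\ref{LCHankel}) is the ordinary Hankel matrix of $S$, and its number of columns equals $n_C=m$. I would invoke Theorem \ref{ThPCId}: when $\rank H(m,1,L)(S)$ attains its maximal value with respect to $m$ and this value equals the column count $m$, the RR system $H_1a_0+H_2\bar{b}=\bar{s}$ has at most one solution, so the coefficients of $L$ are uniquely determined. At this point I would connect to the classical theory cited in the Introduction: $m$ is precisely the degree of the minimal polynomial of $S$, i.e.\ the linear complexity, and the minimal polynomial is unique for periodic sequences. Solvability of (\ref{Constraintonh}) then forces $a_0=1$ (equivalently, the minimal polynomial has nonzero constant term), which is exactly the third hypothesis and the feature that makes $s_{(-1)}$ recoverable.

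Finally I would read off the inverse and the complexity. Applying the inverse formula of Theorem \ref{ThPCId}, $s_{(-1)}=s_{(m-1)}+\langle v_g(S),\bar{b}\rangle$, with $v_g(S)$ the evaluation of $g$ on the initial segment and $\bar{b}=[a_1,\ldots,a_{(m-1)}]^T$ solved from $H_1+H_2\bar{b}=\bar{s}$, yields (\ref{LCinverse}). For the complexity one substitutes the rank $r=m$ and order $m$ into $C({\cal M})=\sqrt{rm}$ to obtain $C(m,1,L)(S)=\sqrt{m\cdot m}=m$. The only delicate point I anticipate is the rank-versus-uniqueness bookkeeping: one must argue that full column rank equal to $m$ both pins down the coefficient vector uniquely and is compatible with $a_0=1$, which is where the periodicity assumption and the classical minimal-polynomial theory are doing the real work rather than the linear algebra alone.
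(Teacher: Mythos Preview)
Your proposal is correct and follows essentially the same route as the paper: both treat the theorem as the $d=1$ specialization of Theorem~\ref{ThPCId} and the Fundamental Lemma, observing that $n_C=m$ gives $M\geq 2m$, that the decomposition (\ref{PolynforRR}) collapses to $h=a_0$ so (\ref{condonh}) forces $a_0=1$, that full column rank $m$ yields a unique coefficient vector, and that the general inverse formula specializes to (\ref{LCinverse}). Your explicit derivation $C({\cal M})=\sqrt{rm}=\sqrt{m\cdot m}=m$ and the appeal to minimal-polynomial uniqueness are slightly more detailed than the paper's discussion but in the same spirit.
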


This is precisely the inverse predicted from the minimal polynomial of the sequence from shortest linear RRs referred in previous work \cite{Sule2,Sule3,Sule4}. The condition $a_0=1$ confirms with the condition that the constant term of the minimal polynomial (which is precisely $a_0$) is non-zero. 

\subsection{Inverse using general second degree polynomials}
First consider the general (assumed weakly homogeneous) second degree polynomial $P(m,2)$ in $m$ variables
\[
f=\sum_{i=0}^{(m-1)}a_iX_i+\sum_{i=0}^{(m-2)}\sum_{j>i}^{(m-1)}a_{ij}X_iX_j
\]
This $f$ is expressed in the two term form
\[
f=X_0h(X_1,\ldots,x_{(m-1)},\bar{a})+g(X_1,\ldots,X_{(m-1)},\bar{b})
\]
with co-efficients $\bar{a}$, $\bar{b}$ of $h$ and $g$. The functions $h$, $g$ have the form
\[
\begin{array}{lcl}
h & = & a_{00}+\sum_{j=1}^{(m-1)}a_{0j}X_j\\
g & = & \sum_{i=1}^{(m-1)}b_iX_i+
\sum_{i=1}^{m-2}\sum_{j>i}^{(m-1)}b_{ij}X_iX_j
\end{array}
\]
Hence
\[
\begin{array}{lcl}
\bar{a} & = & [a_{00},a_{01},\ldots,a_{0(m-1)}]\\
\bar{b} & = & [b_1,\ldots,b_{(m-1)},b_{12},\ldots,b_{(m-2)(m-1)}]
\end{array}
\]
Then we have the Hankel matrix of $f$ in definition \ref{HankelMatrix} written according to the presentation of co-efficients of $f$ as in (\ref{PolynforRR}). For example the Hankel matrix of an $f$ in $P(3,2)$ is of the form, for a sequence of length $M$,
\[
H(S(M))=[H_1|H_2]=
\barr{lll|lll}
s_0 & s_0s_1 & s_0s_2 & s_1 & s_2 & s_1s_2\\
s_1 & s_1s_2 & s_1s_3 & s_2 & s_3 & s_2s_3\\
\vdots & \vdots & \vdots & \vdots & \vdots & \vdots\\
s_{(M-4)} & s_{(M-4)}s_{(M-3)} & s_{(M-4)}s_{(M-2)}
& s_{(M-3)} & s_{(M-2)} & s_{(M-3)}s_{(M-2)}
\earr
\]
Vectors $v_h$, $v_g$ for this polynomial $f$ in $P(3,2)$ defined by polynomials $h$, $g$ evaluated at $S$ are
\[
\begin{array}{lcl}
v_h(S) & = & [1,s_1,s_2]\\
v_g(S) & = & [s_1,s_2,s_1s_2]
\end{array}
\]

For the general weakly homogenous polynomial $f$ in $P(m,2)$ the various numbers can be counted as follows.
\begin{enumerate}
    \item Number of co-effcients in $f$: $N_c=m+{m\choose 2}=(1/2)m(m+1)$.
    \item Number of co-efficients in $\bar{a}$: $N(\bar{a})=m$. Equal to the length of $v_h(S)$.
    \item Number of co-efficients in $\bar{b}$: $N(\bar{b})=(m-1)+{(m-1)\choose 2}-(m-1)={m\choose 2}$. Equal to the length of $v_g(S)$.
    \item Number of columns in $H(S(M))$: $n_C=m+{m\choose 2}=N_c$.
    \item Number of RRs of order $m$ in the sequence length $M$: $n_R=M-m$. These are equal to the number of rows in the Hankel matrix of $f$ evaluated at $S$.
\end{enumerate}
The Hankel matrix of $f$ is the matrix
\[
H(S(M))=[H_1(S)|H_2(S)]
\]
where $H_1$ has $m$ columns and $H_2$ has ${m\choose 2}$
columns.

\begin{theorem}\label{ThGendeg2poly}
    For $d=2$ and $f$ a general homogeneous polynomial of degree $2$, the complexity of inversion \PCId is equal to the smallest $m$ such that 
    \[
   \rank H(S(M))=\rank [H_1(S)|H_2(S)]
    \]
    is maximal with respect to $m$ and the equations (\ref{Constraintonh})
    \[
\begin{array}{rcl}
H_1(S)\bar{a}+H_2(S)\bar{b} & = & \bar{s}\\
v_h(S)\bar{a} & = & 1
\end{array}
\]
have a solution. The unique inverse is given by
    \[
    s_{(-1)}=s_{(m-1)}+<v_g(S),\bar{b}>
    \]
\end{theorem}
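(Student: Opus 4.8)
The plan is to obtain this statement as the $d=2$ instance of the general characterization already proved in Theorem \ref{ThPCId}, so the work is almost entirely one of matching the objects of the general theory to their explicit quadratic forms rather than of proving anything new.

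First I would make the two-term presentation (\ref{PolynforRR}) explicit for the general weakly homogeneous quadratic $f=\sum_{i} a_i X_i+\sum_{i<j} a_{ij}X_iX_j$. Collecting the monomials that have $X_0$ as a factor, namely $X_0$ itself and the products $X_0X_j$ for $j=1,\ldots,m-1$, gives $X_0h$ with $h=a_{00}+\sum_{j=1}^{m-1}a_{0j}X_j$; the remaining monomials, carrying no factor $X_0$, give $g=\sum_{i=1}^{m-1}b_iX_i+\sum_{1\le i<j\le m-1}b_{ij}X_iX_j$. A coefficient count then confirms that this is a faithful reparametrization of all of $P(m,2)$: one has $N(\bar a)=m$ and $N(\bar b)=(m-1)+\binom{m-1}{2}=\binom{m}{2}$, whose sum is the full coefficient count $N_c=m+\binom{m}{2}$, so no associated polynomial of degree two is lost or double-counted by this presentation.

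Next I would identify the induced data with their quadratic instances. The Hankel matrix $H(S(M))=[H_1(S)\mid H_2(S)]$ is the matrix of shifted evaluations of these two monomial blocks, $H_1$ carrying the $m$ columns coming from $\{X_0,X_0X_j\}$ and $H_2$ the $\binom{m}{2}$ columns coming from $\{X_i,X_iX_j\}$; the evaluation vectors $v_h(S),v_g(S)$ of (\ref{vhvg}) are read off from $h,g$ on the initial window of $S$ (for $P(3,2)$ these are the displayed $[1,s_1,s_2]$ and $[s_1,s_2,s_1s_2]$). Under these identifications the RR constraints (\ref{eqnRR}) together with the inversion constraint (\ref{condonh}) are exactly the combined system (\ref{Constraintonh}) quoted in the statement.

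With this dictionary in place the theorem follows by invoking Theorem \ref{ThPCId} at $d=2$: its length bound (\ref{Boundonmd}) is the standing assumption that $M$ is large; its maximal-rank requirement (\ref{Maxrank}) is precisely the stated rank condition on $H(S(M))$; and its solvability requirement (\ref{Constraintonh}) is the stated solvability of the combined system. The inverse formula $s_{(-1)}=s_{(m-1)}+\langle v_g(S),\bar b\rangle$ is the one delivered by the Fundamental Lemma \ref{FundLemma} and transmitted through Theorems \ref{ThInverse} and \ref{ThPCId}. Because this is an instantiation rather than a fresh argument I do not expect a substantive obstacle; the one point deserving care is the bookkeeping of the second paragraph, namely verifying that the split into $h$ and $g$ is both exhaustive and non-redundant at degree two, so that maximal rank of the quadratic Hankel matrix genuinely captures every linearly independent RR available at order $m$ and degree $2$ and the solution set for $(\bar a,\bar b)$ is the full set of degree-two associated polynomials.
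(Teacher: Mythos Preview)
Your proposal is correct and follows essentially the same route as the paper: both treat the theorem as the $d=2$ specialization of the general framework, with the paper's short proof appealing directly to the definition of \PCId\ and the inversion relation (\ref{InverseRelationwithcoeff}) while you route through Theorem~\ref{ThPCId} and the Fundamental Lemma, which amounts to the same thing. The explicit bookkeeping you propose for the split $f=X_0h+g$ and the coefficient counts is in fact carried out by the paper in the preamble to the theorem rather than in the proof itself, so your plan simply folds that preliminary material into the proof body.
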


\begin{proof}
    From the definition of complexity of inversion, the Hankel matrix of $f$ must have the maximal rank. Then the existence and unique solvability of the inverse is equivalent to the existence of solutions to equations (\ref{Constraintonh}). The inversion relation (\ref{InverseRelationwithcoeff}) becomes $s_{(m-1)}=s_{(-1)}h(S)+g(S)$ which gives the inverse as stated since $h(S)=<v_h(S),\bar{a}>=1$ and $g(S)=<v_g(S),\bar{b}>$.
\end{proof}

\subsection{Inverse using homogeneous second degree polynomials}
In this case we consider $f$ in $P(m,2)$ which is homogeneous
\[
f=\sum_{i=0}^{(m-2)}\sum_{j>i}^{(m-1)}a_{ij}X_iX_j
\]
This polynomial has representation in the form (\ref{PolynforRR}) as
\[
\begin{array}{lcl}
f & = & X_0h(X_1,\ldots,X_{(m-1)})+g(X_1,\ldots,X_{(m-1)})\\
 & & X_0(\sum_{i=1}^{(m-1)}a_{0i}X_i)+
(\sum_{i=1}^{(m-2)}\sum_{j>i}^{(m-1)}a_{ij}X_iX_j)
\end{array}
\]
For example for $m=4$, $d=2$, the Hankel matrix (\ref{HankelMatrix}) of such an $f$ is
\[
\begin{array}{lcl}
H(S) & = & [H_1(S)|H_2(S)]\\
 & = &
\barr{lll|lll}
s_0s_1 & s_0s_2 & s_0s_3 & s_1s_2 & s_1s_3 & s_2s_3\\
s_1s_2 & s_1s_3 & s_1s_4 & s_2s_3 & s_2s_4 & s_3s_4\\
\vdots & \vdots & \vdots\\
s_{(M-5)}s_{(M-4)} & s_{(M-5)}s_{(M-3)} & s_{(M-5)}s_{(M-2)} & s_{(M-4)}s_{(M-3)} & s_{(M-4)}s_{(M-2)} & s_{(M-3)}s_{(M-2)}
\earr
\end{array}
\]

\[
\begin{array}{lll}
 v_h(S) & = & [s_0s_1,s_0s_2,s_0s_3] \\
 v_g(S) & = & [s_1s_2,s_1s_3,s_2s_3]
\end{array}
\]
Various numbers associated with the polynomial $f$ are as follows:
\begin{enumerate}
    \item Number of co-effcients in $f$: $N_c={m\choose 2}=(1/2)m(m-1)$.
    \item Number of co-efficients in $\bar{a}$: $N(\bar{a})=(m-1)$. Equal to the length of $v_h(S)$.
    \item Number of co-efficients in $\bar{b}$: $N(\bar{b})={(m-1)\choose 2}$. Equal to the length of $v_g(S)$.
    \item Number of columns in $H(S(M))$: $n_C=(m-1)+{(m-1)\choose 2}=N_c$.
    \item Number of RRs of order $m$ in the sequence length $M$: $n_R=M-m$. These are equal to the number of rows in the Hankel matrix of $f$ evaluated at $S$.
\end{enumerate}

\begin{theorem}\label{ThHomf}
If the associated polynomial $f$ is restricted to be homogeneous of degree $2$, then the \PCId\ $m$ is the smallest number such that the Hankel matrix $H(S))$ has maximal rank with respect to $m$ and the system of equations
\[
\begin{array}{rcl}
H_1(S)\bar{a}+H_2(S)\bar{b} & = & \bar{s}\\
v_h(S)\bar{a} & = & 1
\end{array}
\]
has a solution $\bar{a}$, $\bar{b}$. The unique inverse is given by
\[
s_{(-1)}=s_{(m-1)}+<v_g(S),\bar{b}>
\]
\end{theorem}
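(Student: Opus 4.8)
The plan is to recognize that Theorem \ref{ThHomf} is the specialization of Theorem \ref{ThGendeg2poly} to the subclass of homogeneous degree-$2$ polynomials, so its proof follows the same three-step skeleton built on the Fundamental Lemma (Lemma \ref{FundLemma}) and the characterization of \PCId\ in Theorem \ref{ThPCId}. First I would check that the two-term presentation (\ref{PolynforRR}) survives the homogeneity restriction: writing a homogeneous quadratic $f$ as $f=X_0h+g$ forces $h(X_1,\ldots,X_{(m-1)})=\sum_{i=1}^{(m-1)}a_{0i}X_i$ to be a linear form with no constant term and $g(X_1,\ldots,X_{(m-1)})=\sum_{i<j}a_{ij}X_iX_j$ to be a homogeneous quadratic in the remaining variables. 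In particular weak homogeneity $f(0,\ldots,0)=0$ and $g(0,\ldots,0)=0$ hold automatically, so the hypotheses of the Fundamental Lemma are met and the column structure of the Hankel matrix $H(S)=[H_1(S)|H_2(S)]$ is exactly the one displayed above, with $v_h(S)$ the evaluations of the $X_0X_i$ monomials and $v_g(S)$ those of the $X_iX_j$ monomials.

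The second step is to invoke Lemma \ref{FundLemma} directly. It states that $f$ is an associated polynomial with a unique inverse iff the RR system $H_1(S)\bar{a}+H_2(S)\bar{b}=\bar{s}$ together with the inversion constraint $<v_h(S),\bar{a}>=1$ (equation (\ref{condonh})) admit a solution, and that in that case the inverse is $s_{(-1)}=s_{(m-1)}+<v_g(S),\bar{b}>$ exactly as in (\ref{Inverse}). Because nothing in the derivation of Lemma \ref{FundLemma} used the presence of constant or other lower-degree monomials in $f$, this reasoning applies verbatim once the monomial set is fixed to the homogeneous quadratic one, and it yields precisely the two displayed equations of the theorem statement.

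The third step is to feed this into Theorem \ref{ThPCId}. By Definition \ref{DefPCofI} the complexity \PCId\ is the smallest $m$ at which the length bound (\ref{Boundonmd}) holds with $n_C={m\choose 2}$, the Hankel matrix attains its maximal rank with respect to $m$ (condition (\ref{Maxrank})), and the combined system (\ref{Constraintonh}) is consistent. Steps one and two show that the maximal-rank and consistency requirements are exactly the two hypotheses listed in the theorem, and that the consistent system produces the stated inverse formula; hence the smallest such $m$ is \PCId\ and the inverse is as claimed.

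The step I expect to require the most care is confirming that the maximal-rank characterization of Theorem \ref{ThPCId} remains the correct notion of complexity once $f$ is constrained to be homogeneous rather than ranging over all of $P(m,2)$. The subtlety is that the maximum in (\ref{Maxrank}) must now be taken over Hankel matrices built from the homogeneous quadratic monomial set alone as $m$ varies, so one should verify that deleting the linear and constant columns does not invalidate the argument that maximal column rank at the smallest such $m$ captures all linearly independent RRs. This is the homogeneous analogue of Remark \ref{Constancyofeval}, which guarantees that the monomial set, and hence the Hankel matrix together with the vectors $v_h(S)$ and $v_g(S)$, is determined by $m$ and $d$ alone; invoking that remark closes the gap with only a bookkeeping check of the coefficient counts $N(\bar{a})=m-1$ and $N(\bar{b})={(m-1)\choose 2}$.
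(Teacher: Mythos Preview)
Your proposal is correct and follows essentially the same route as the paper: the paper's entire proof of Theorem~\ref{ThHomf} is the single sentence ``The proof follows on similar lines as that of proof of Theorem~\ref{ThGendeg2poly},'' and your three-step plan (check the two-term form (\ref{PolynforRR}) survives homogeneity, invoke the Fundamental Lemma~\ref{FundLemma}, then read off \PCId\ via Theorem~\ref{ThPCId}) is exactly that specialization written out in detail. Your concern about the maximal-rank condition under the restricted monomial set is a point the paper simply leaves implicit, so your treatment is if anything more careful than the original.
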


The proof follows on similar lines as that of proof of Theorem \ref{ThGendeg2poly}.

\subsection{Number of solutions of associated polynomials}
Theorems in the previous section gave necessary and sufficient conditions for existence of associated polynomials of smallest order $m$ to achieve a unique inverse of a given sequence and utilise the maximal number of LI RRs for a given length of the sequence. In this section the definition of complexity of inversion of a sequence is extended over a fixed set of monomials ${\cal M}$. Further, conditions for existence of inverse are extended over ${\cal M}$ and a unique solution of the associated polynomial as a linear form over ${\cal M}$ are developed. Analogous results are stated for polynomials of special structures to serve as associated polynomials and achieve inverse while utilising maximal number of LI RRs. 

We now determine the number of associated polynomials which have give a unique inverse for the sequence. Given a sequence $S(M)$ of length $M$, if $f$ is a polynomial in $P(m,d)$ with a fixed structure which fixes the types of monomials present in $f$, let $m$ be the smallest at which $H(m,d,f)(S)=H(S)$ achieves its maximal rank $r(m)$ with respect to $m$ for some $d$. Then there exist $r(m)$ LI columns in $H(S)$. Moreover by Theorem \ref{TheInverse}, an asscociated polynomial $f$ exists which has a unique inverse iff the linear system
\[
\barr{ll}
H_1 & H_2\\v_h & 0
\earr
\barr{l}
\bar{a}\\\bar{b}
\earr
=
\barr{l}
\bar{s}\\1
\earr
\]
has a solution. Since the last equation $<v_h,\bar{a}>=1$ can at most increase the rank of this system to $(r(m)+1)$, there exist at most $(r(m)+1)$ LI columns of the matrix
\[
H_h=
\barr{ll}
H_1 & H_2\\v_h & 0
\earr
\]
which form a submatrix $H_{\mbox{max}}$ and a matrix $A_h$ such that all solutions to $\bar{a}$, $\bar{b}$ can be determined in terms of the following linear system.

It follows that the number of associated polynomials has the bound
\[
n_f\leq 2^{n_C-r(m)}
\]
where $n_C$ is the number of columns of $H(m,d,f)$ also equal to the maximum number of monomials in $f$. For each of these associated polynomials we have a unique inverse. From this discussion we have 

\begin{theorem}\label{ThUniqueAssPoly}
If $N_c$ is the number of monomials for some $m,d$ where $m$ is the \PCId\ of a sequence $S(M)$ and $r(m)$ is the rank of $H(m,d)(S)$ then there exist at most $2^{N_c-r(m)}$ distinct associated polynomials which determine inverses of $S(M)$. If $f$ is chosen with an a-priori structure (or monomials) with order $m$ and degree $d$, has $N_c$ number of monomials and has inverse, then there is unique such $f$ iff $N_c=r(m)$ where $r(m)$ is the \PCId\ of $S(M)$.  
\end{theorem}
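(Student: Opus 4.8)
The plan is to reduce the entire statement to counting solutions of linear systems over $\ftwo$. The governing fact is elementary: a consistent linear system over $\ftwo$ in $N_c$ unknowns whose coefficient matrix has rank $\rho$ has solution set equal to an affine translate of the kernel of that matrix, hence exactly $2^{N_c-\rho}$ solutions. By the Fundamental Lemma (Lemma \ref{FundLemma}), the associated polynomials $f$ of the chosen structure (with the prescribed $N_c$ monomials) that determine an inverse are in bijection with the solutions $(\bar a,\bar b)$ of the combined system (\ref{Constraintonh}), while the associated polynomials that merely satisfy the recurrence relations are in bijection with the solutions of the RR system (\ref{eqnRR}). Thus both assertions of the theorem become statements about the dimensions of these solution sets.

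For the bound $n_f \leq 2^{N_c - r(m)}$ I would argue as follows. Every inverse-determining $f$ is in particular an associated polynomial, so its coefficient vector solves (\ref{eqnRR}); hence the inverse-determining solutions form a subset of the solution set of (\ref{eqnRR}). The coefficient matrix of (\ref{eqnRR}) is $H(m,d)(S)$, which has $N_c$ columns and rank $r(m)$, so its solution set, nonempty since an inverse is assumed to exist, has cardinality $2^{N_c - r(m)}$. A subset cannot be larger, giving $n_f \leq 2^{N_c - r(m)}$. Equivalently one may append the row $[v_h,0]$ to obtain the matrix $H_h$ of (\ref{Constraintonh}); since appending one row raises the rank by at most one, $r(m)\le \rank H_h \le r(m)+1$, so the exact count $2^{N_c-\rank H_h}$ again respects the bound.

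For the uniqueness criterion, the forward implication is clean: if $N_c = r(m)$ then $H(m,d)(S)$ has full column rank, its kernel is trivial, and the RR system (\ref{eqnRR}) has at most one solution; since an inverse-determining $f$ is assumed to exist, that solution is unique, so $f$ is unique. For the converse I would argue contrapositively. If $N_c > r(m)$ the kernel of $H(m,d)(S)$ is nontrivial, so (\ref{eqnRR}) has at least two associated-polynomial solutions, and it then remains to propagate this multiplicity through the extra inversion constraint (\ref{condonh}), $\langle v_h,\bar a\rangle = 1$.

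This last step is where I expect the real difficulty to sit, and it is the crux of the converse. If the row $[v_h,0]$ lies in the row space of $H(m,d)(S)$, then $\langle v_h,\bar a\rangle$ is constant over all RR-solutions, so either none of them or all $2^{N_c-r(m)}\ge 2$ of them determine an inverse; the assumption that an inverse exists forces the latter, hence non-uniqueness. The delicate case is when $[v_h,0]$ is independent of the rows of $H(m,d)(S)$: then exactly half of the RR-solutions, namely $2^{N_c-r(m)-1}$, satisfy (\ref{condonh}), and this equals one precisely when $N_c = r(m)+1$. To obtain the stated iff one must exclude this boundary, and I would do so by invoking the maximal-rank defining property of $\PCId$ (Definition \ref{DefPCofI} and Theorem \ref{ThPCId}): at $m=\PCId$ the rank $r(m)$ already absorbs every linearly independent relation the sequence supplies, so the inversion row cannot contribute a genuinely new independent relation and $\rank H_h = r(m)$. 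Under that reduction the first case above applies, uniqueness forces $N_c = r(m)$, and the converse closes. Making this reduction rigorous, rather than counting solutions of the RR system in isolation, is the step demanding the most care.
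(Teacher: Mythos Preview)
Your approach is essentially the same as the paper's: both reduce the count to rank--nullity over $\ftwo$ applied to the combined system (\ref{Constraintonh}), obtain the kernel dimension bounds $N_c-r(m)-1\le d_c\le N_c-r(m)$, read off the upper bound $2^{N_c-r(m)}$, and identify uniqueness with $d_c=0$.

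Where you differ is in the handling of the converse. The paper simply asserts that $d_c=0$ ``is same as'' $N_c=r(m)$, without separately treating the possibility that the inversion row $[v_h,0]$ is independent of the rows of $H(m,d)(S)$, in which case $d_c=0$ would instead give $N_c=r(m)+1$. You isolate this boundary case explicitly and propose to rule it out via the maximal-rank clause in Definition~\ref{DefPCofI}. That is a legitimate concern, and in fact the paper's own proof does not resolve it any more rigorously than you do; it effectively takes for granted that the augmented system has the same rank $r(m)$. So your argument is at least as complete as the paper's, and your flagging of this step as ``demanding the most care'' is accurate: neither you nor the paper supplies a fully worked justification that the inversion constraint cannot raise the rank at $m=\PCId$.
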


\begin{proof}
    When $m$ is \PCId\ then rank of $H(m,d)=r(m)$ is maximal and the equations (\ref{Constraintonh}) are consistent. Then the dimension $d_C$ of the kernel of the linear system (\ref{Constraintonh}) is $(n_C-r(m)-1)\leq d_c\leq (n_C-r(m))$ where $n_C$ is the number of columns in the matrix of these equations which is same as $N_c$ the number of monomials. Hence there are at most $2^{N_c-r(m)}$ solutions to associated polynomials. Since the field is binary the number of possible solutions is bounded by $2^{(N_c-r(m)}$. Hence there is a unique solution to the set of associated polynomials when $d_c=0$ which is same as $N_c=r(m)=\rank H(m,d,f)(S)$ for any polynomials $f$ with a fixed structure (or fixed set of monomials).
\end{proof}

\subsubsection{Number of solutions of $f$ relative to a fixed set of monomials}
For a fixed set of monomials ${\cal M}$ if the order is $m$ and the Hankel matrix $H({\cal M})(S)$ has rank $r$ then for every subset ${\cal M}_1\subset{\cal M}$ which has $\rank H({\cal M}_1)(S)=r$ there is a unique associated polynomial defined by solutions of co-efficients in the equation
\[
H_{11}(S)\bar{a}+H_{12}(S)\bar{b}=\bar{s}
\]
And a unique universe as shown in Theorem \ref{Th:InversionoverM}. Hence the rank of the system combined system of linear equations is $\leq(r+1)$ and thus the number of associated polynomials $n_f$ which determine an inverse is bound by
\[
2^{|{\cal M}|-(r+1)}\leq n_f\leq 2^{|{\cal M}|-r}
\]

\subsection{FSR polynomials satisfying the Golomb's condition for non-singularity}
We now investigate existence of associated polynomials $f$ for a vector sequence $S(M)$ which model all the co-ordinate sequences as sequences generated by FSRs whose feedback polynomials are $f$ and determine a unique inverse for each co-ordinate sequence. This implies that the FSR represented by the associated polynomial is non-singular. The Golomb's condition for non-singularity of an FSR with feedback function $f(X_0,X_1,\ldots,X_{(m-1)})$ is that
\beq\label{Golombcond}
f=X_0+g(X_1,\ldots,X_{(m-1)},\bar{b})
\eeq
where $\bar{b}$ is the vector of co-efficients of the polynomial $g$. The polynomial $g$ is assumed to be weakly homogeneous. The sequence $S(M)$ is then invertible since the condition for invertibility (\ref{InverseRelationwithcoeff}) requires that for each of the co-ordinate sequences $S(i)$
\beq\label{GolombInverse}
S(i)_{(m-1)}=S(i)_{(-1)}+g(S(i)_0,S(i)_1,\ldots,S(i)_{(m-2)}
\eeq
Hence $S(i)_{(-1)}$ can always be calculated for any associated polynomial $f$ satisfying the Golomb's condition (\ref{Golombcond}) and has the value
\beq\label{GolombInverse2}
S(i)_{(-1)}=S(i)_{(m-1)}+<v_g(S),\bar{b}>
\eeq
The polynomial of the form (\ref{Golombcond}) is already in the form (\ref{PolynforRR}) used for defining the Hankel matrix and expressing the condition for $f$ to be an associated polynomial. Hence we can state

\begin{theorem}\label{NonsingularFSR}
    A polynomial $f$ satisfying the Golomb's condition (\ref{Golombcond}) is an associated polynomial for a scalar sequence $S(M)$ iff
    \[
    [s_m+s_0,s_{(m+1)}+s_1,\ldots,s_{(M-1)}+s_{(M-m-1)}]^T
    \]
    is in the span of columns of the matrix
    \[
    H(S)=
    \barr{l}
    v_g(S)\\v_g(\sigma S)^T\\\vdots\\v_g(\sigma^{M-m-1} S)
    \earr
    \]
    Every such polynomial defines a unique inverse as in (\ref{GolombInverse2}).
\end{theorem}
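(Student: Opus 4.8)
The plan is to specialize the definition of an associated polynomial directly to the Golomb form (\ref{Golombcond}) and read off the stated span condition. First I would substitute $f = X_0 + g(X_1,\ldots,X_{(m-1)},\bar{b})$ into the recurrence relations (\ref{RR}). For each shift index $j = 0,1,\ldots,M-m-1$ this yields
\[
s_{(m+j)} = s_j + g(s_{(j+1)},\ldots,s_{(j+m-1)},\bar{b}),
\]
since evaluating the leading variable $X_0$ at the first entry of the window $\sigma^j S$ produces $s_j$. The essential observation is that over $\ftwo$ addition and subtraction coincide, so this rearranges to
\[
s_{(m+j)} + s_j = g(s_{(j+1)},\ldots,s_{(j+m-1)},\bar{b}) = <v_g(\sigma^j S),\bar{b}>,
\]
using the scalar-product representation (\ref{vhvg}) of $g$ evaluated on the shifted sequence.

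Next I would stack these $M-m$ scalar equations over $j$. The left-hand sides assemble into exactly the vector
\[
[s_m+s_0,\,s_{(m+1)}+s_1,\,\ldots,\,s_{(M-1)}+s_{(M-m-1)}]^T,
\]
while the right-hand sides assemble into $H(S)\bar{b}$, where $H(S)$ is the matrix whose $j$-th row is $v_g(\sigma^j S)$ as displayed in the statement. A fixed Golomb-form $f$ with coefficient vector $\bar{b}$ is therefore an associated polynomial precisely when $\bar{b}$ solves this stacked linear system; consequently an associated polynomial of Golomb form exists if and only if the system is consistent, i.e.\ if and only if the left-hand-side vector lies in the column span of $H(S)$. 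This is exactly the span condition to be proved, and it settles both directions of the equivalence by the standard consistency criterion for a linear system over $\ftwo$.

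Finally I would address the inversion claim. The Golomb form is already the special case of the two-term presentation (\ref{PolynforRR}) with $h \equiv 1$, so $<v_h(S),\bar{a}> = 1$ holds identically and the inversion condition (\ref{condonh}) of the Fundamental Lemma (Lemma \ref{FundLemma}) is automatic for \emph{every} Golomb-form polynomial. Hence, whenever the span condition holds and a solution $\bar{b}$ exists, Lemma \ref{FundLemma} (equivalently Theorem \ref{ThInverse}) guarantees a unique inverse given by $s_{(-1)} = s_{(m-1)} + <v_g(S),\bar{b}>$, which is (\ref{GolombInverse2}). The argument presents no genuine obstacle; the only point requiring care is the bookkeeping that identifies the stacked $g$-evaluations with the rows of $H(S)$ and confirms that the $X_0$ term contributes precisely the diagonal shift ``$s_j$'' on the left-hand side, after which consistency of a linear system is the entire content.
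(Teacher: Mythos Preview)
Your proposal is correct and follows essentially the same line as the paper, which does not give a separate proof but derives the theorem from the discussion immediately preceding it: the Golomb form is the special case of (\ref{PolynforRR}) with $h\equiv 1$, so the RRs reduce to $s_{(m+j)}+s_j=<v_g(\sigma^jS),\bar b>$ and the inversion condition (\ref{condonh}) holds automatically, yielding (\ref{GolombInverse2}). Your write-up simply makes these steps explicit.
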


Since the matrix $H$ may not have maximal rank equal to the number of monomials in the polynomial $g$, a non-singular FSR which determines an inverse for a given sequence may not necessarily be unique. 

\subsubsection{Illustration of a non-singular FSR inverting a sequence}
Consider an order $3$ feedback polynomial of a non-singular FSR
\[
f(X_0,X_1,X_2)=X_0+g(X_1,X_2)
\]
where $g(X_1,X_2)=X_1+X_2+X_1X_2$. Next consider a sequence $\{s_0,s_1,\ldots,s_7\}$ of length $8$. Then the RRs defined by $f$ for $S(8)$ are
\[
\begin{array}{lcl}
s_3 & = & s_0+s_1+s_2+s_1s_2\\
s_4 & = & s_1+s_2+s_3+s_2s_3\\
s_5 & = & s_2+s_3+s_4+s_3s_4\\
s_6 & = & s_3+s_4+s_5+s_4s_5\\
s_7 & = & s_4+s_5+s_6+s_5s_6
\end{array}
\]
Hence the polynomial $f$ is an associated polynomial of $S(8)$ iff above RRs hold. The inverse is given by
\[
s_{(-1)}=s_2+g(s_0,s_1)=s_0+s_1+s_2+s_0s_1
\]
The vector 
\[
v_g(S)=[s_1,s_2,s_1s_2]
\]
The matrix $H(S)$ is 
\[
\barr{l}
v_g(S)\\v_g(\sigma S)\\v_g(\sigma^2 S)\\v_g(\sigma^3 S)\\v_g(\sigma^4 S)
\earr
=
\barr{llll}
s_1 & s_2 & s_1s_2\\s_2 & s_3 & s_2s_3\\s_3 & s_4 & s_3s_4\\s_4 & s_5 & s_4s_5\\
s_5 & s_6 & s_5s_6
\earr
\]
Conversely if $g=b_0X_1+b_1X_2+b_2X_1X_2$ is a polynomial with $b_i$ in $\ftwo$, then there exists a polynomial of the form $f=X_0+g(X_1,X_2)$ which is an associated polynomial of $S$ iff
\[
H(f,S)
\barr{l}1\\\bar{b}\earr=
\barr{llll}
s_0 & s_1 & s_2 & s_1s_2\\
s_1 & s_2 & s_3 & s_2s_3\\
s_2 & s_3 & s_4 & s_3s_4\\
s_3 & s_4 & s_5 & s_4s_5\\
s_4 & s_5 & s_6 & s_5s_6
\earr
\barr{l}
1\\b_0\\b_1\\b_2
\earr
=
\barr{l}
s_3\\s_4\\s_5\\s_6\\s_7
\earr
\]
has a solution $\bar{b}$. 
\section{Expected Complexity of Inversion of Sequences and Local Inversion of maps}
In this brief section we address the issue of expected complexity of inversion. The problem of inversion in practice as well as that of local inversion of a map always arises with respect to a partially available data as described in the section \ref{PartialSequence}. We shall visit this description again here with more details for convenience. These problems can be described as follows.
\begin{enumerate}
    \item \emph{Local inversion of a map from the data of a partial iterative sequence}. Given a map $F:\ftwo^n\rightarrow\ftwo^n$ local inversion problem is to find $x$ such that for a given $y$ we have $y=F(x)$. We proposed in previous work \cite{Sule1,Sule2,Sule3,Sule4} that this problem can be addressed by the Black Box Linear Algebra by finding the recurrence relations of the iterative sequence $y(k+1)=F(y(k))$ where $y(0)=y$, $k=1,2,\ldots$. The resulting sequence is always ultimately periodic with an unknown and exponential period $N=O(2^n)$. However a practically feasible length sequence $\{y(k)\}$ upto length $M$ is assumed to be available for computation such as (sub-exponential to polynomial size) $M=O(2^{(\log N)^{\alpha}(\log\log N)^{1-\alpha}k})$ where $0\leq\alpha\leq 1$. Hence the local inverse $x$ solved from a partial sequence has a probability of being the true inverse of the original sequence. A separate computational test is made available for verification of the correct inverse. This problem has important application in Cryptanalysis and solving the Observability problem in finite state systems.
    \item \emph{Inversion of a sequence from a partially specified sequence}. As suggested in this paper the local inversion problem of maps can be extended to the problem of inversion of sequences. Above practical issue of availability of the original sequence of limited length is identical in both these problems. There is a sequence $S(N)$ of length $N$ which is exponential. We are given a partial sequence $S(M)$ of a length $M$ which is polynomial to sub-exponential size in $\log N$ and it is required to find the inverse of $S(N)$ using recurrence relations discovered over $S(M)$. Hence the inverse $s_{(-1)}$ has a probability of being a true inverse of the original sequence $S(N)$. A separate computational test is available for verification of the correct inverse.
\end{enumerate}

From the above description of the practical issue in the inversion problems it follows that the complexity of inversion is dependent on the length of the sequence available for computation of inversion. Further the expected probability of the correctness of the solution also depends on the sequence length as well as the order and degree of the associated polynomials. Longer the length $M$ of the available sequence from that of the full sequence, higher is the expected probability of correctness of the solution. Once the length $M$ of the given sequence is bounded, the complexity of solving the inversion problem is also bounded. Hence it is more appropriate to define expected probability of inversion as the probability of correct inversion in the average case of sequences of given partial sequence of length $M$.

\begin{definition}[Expected Probability and Expected Complexity of Inversion] If $N$ is a length of sequences of exponential size and $M$ is of sub-exponential size. Then the \emph{expected complexity of inversion} is the complexity of inversion as a function of $N$ in the average case of sequences of length $M$. Expected probability of inversion is the probability of inverse of a partial sequence of length $M$ to be the inverse of the full sequence of length $N$. 
\end{definition}

In Cryptanalysis as well as in Observability problems of finite state dynamical systems, the algorithm which generates the given sequence is available (or the map with the local point is available) such as PR generators with an IV, encryption algorithms with known plaintext or the maps of system dynamics with an output sequence. The practical issue is then what is the probablity of correct inversion when only a partial sequence of sufficient length is available. The well known method of TMTO attack on local inversion $y=F(x)$ of a map $F$ is a search method of order $O(2^n/2)$ to solve the inversion problem. However this method does not utilise the structure of the underlying finite field and higher properties of sequences such as recurrence relations and the minimal polynomial. The method of computing the associated polynomial from a given sequence to solve the inversion problem has complexity dependent on the order of the associated polynomials and rank of the Hankel matrix of monomials. The product of these two as a complexity measure is the complexity of inversion using recurrence relations of the sequence and the associated polynomials. 

\subsection{Conjecture on probability of correct inversion using a partial sequence}
If $S(M)$ is a partial sequence of an original periodic sequence $S(N)$, then the probability of correct inversion of $S(N)$ using a recurrence relation on $S(M)$ is the same as the probability of an associated polynomial of $S(M)$ having an inverse to be an associated polynomial of $S(N)$. Hence, for a fixed set of monomials ${\cal M}$, if the rank of the Hankel matrix $r(M)=\rank H({\cal M}(S(M))$ is maximal for the sequence $S(N)$ i.e. 
\beq\label{MaxRankrelation}
r({\cal M})=\rank H({\cal M}(S(N))
\eeq
then the inverse of $S(M)$ obtained over ${\cal M}$ is expected to be correct inverse of $S(N)$. Hence the expected complexity of inversion is the complexity of inversion with respect to a chosen set ${\cal M}$ which is $C({\cal M})=\sqrt{mr({\cal M})}$. 

Since there are multiple solutions of associated polynomials defined by linear forms over ${\cal M}$ but for each subset ${\cal M}_1\subset {\cal M}$ whose Hankel matrix has the same maximal rank equal to $|{\cal M}_1|$ the number of monomials in ${\cal M}_1$, there is a unique associated polynomial defined over ${\cal M}_1$. Hence it is logical to conjecture the following.

    \begin{conjecture} Let $S(M)$ be a partial sequence of an original sequence $S(N)$, $N>>M$. Let ${\cal M}$ be a fixed set of monomials and $r({\cal M})$ be the rank of $H({\cal M}(S(M))$. If the monomial set ${\cal M}$ is sufficient to capture the maximal rank of the sequence $S(N)$ i.e. the condition (\ref{MaxRankrelation}) holds, then the probability of correct inversion using the partial sequence $S(M)$ and associated polynomials defined over ${\cal M}$ is
    \[
    P_{\cal M}=\frac{1}{
    \mbox{\rm Number of subsets ${\cal K}\subset{\cal M}$ with maximal rank equal to $r({\cal M})$}
    }
    \]
    This is the expected probability of inversion of $S(N)$ using the partial sequence $S(M)$ relative to a monomial set ${\cal M}$ satisfying (\ref{MaxRankrelation}) 
    \[
    P_{\rm exp}=P_{\cal M}. 
    \]
    However this leaves open the question of probability of the condition (\ref{MaxRankrelation}) being satisfied for a monomial set ${\cal M}$ which achieves maximal rank at $S(M)$. Let this be denoted as $P_0$. The final probability of correct inversion for any ${\cal M}$ which achieves maximal rank at $S(M)$ is then
     \[
     P_{\rm Inv}=P_{\rm exp}P_0
     \]
 \end{conjecture}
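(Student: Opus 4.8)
The plan is to reduce the probabilistic statement to a counting problem over maximal-rank submatrices, and then to isolate the single genuinely heuristic step that forces this to remain a conjecture rather than a theorem. First I would make the sample space precise. Fixing the monomial set $\mathcal{M}$ with $r(\mathcal{M})=\rank H(\mathcal{M})(S(M))$, the inversion procedure selects a subset $\mathcal{K}\subset\mathcal{M}$ of full column rank with $|\mathcal{K}|=r(\mathcal{M})$, and by Theorem \ref{Th:InversionoverM} together with the counting in Theorem \ref{ThUniqueAssPoly} each such $\mathcal{K}$ determines a \emph{unique} associated polynomial and hence a unique candidate inverse $s_{(-1)}^{\mathcal{K}}$ via $s_{(-1)}=s_{(m-1)}+\langle\bar{m}_2(S),\bar{b}\rangle$. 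The number of admissible subsets, call it $n_{\mathcal{K}}$, is exactly the denominator appearing in $P_{\mathcal{M}}$, so the event ``correct inversion'' must be identified with the event that the chosen $\mathcal{K}$ yields the inverse valid for $S(N)$.

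Next I would exploit the hypothesis (\ref{MaxRankrelation}), namely $r(\mathcal{M})=\rank H(\mathcal{M})(S(N))$. Because extending the sequence from length $M$ to the full period $N$ only appends rows to $H(\mathcal{M})(S)$ without raising its rank, the column space---and hence the affine solution set of the recurrence system $H(\mathcal{M})(S)[\bar{a};\bar{b}]^{T}=[\bar{s};1]^{T}$---is preserved. I would argue from this that every associated polynomial computed over a maximal-rank $\mathcal{K}$ on $S(M)$ stays consistent with all recurrence constraints of $S(N)$, so the family $\{s_{(-1)}^{\mathcal{K}}\}$ necessarily contains the true inverse of $S(N)$. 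This converts the problem into: among the $n_{\mathcal{K}}$ maximal-rank subsets, how many produce the globally correct inverse, and with what frequency is a given one selected.

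The counting of ``how many subsets are correct'' is the first difficulty. Since the field is $\ftwo$ and the inverse is affine in $\bar{b}$, the candidate inverses partition the subsets into the at most two equivalence classes described in the discussion preceding Theorem \ref{ThCommonInverse}; the conjectured value $1/n_{\mathcal{K}}$ implicitly asserts that in the average case exactly one subset is distinguished, i.e. that the correct-inverse class is a singleton and that selection is uniform over the $n_{\mathcal{K}}$ subsets. I would attempt to justify this by an exchangeability argument over the ensemble of sequences of length $M$ extending to period $N$: if no monomial subset is distinguished a priori, then by symmetry each $\mathcal{K}$ is equally likely to be the one agreeing with $S(N)$, which yields $P_{\mathcal{M}}=1/n_{\mathcal{K}}$ and $P_{\rm exp}=P_{\mathcal{M}}$.

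The hard part---and the reason this is stated as a conjecture---will be justifying precisely this exchangeability/uniformity assumption and, separately, estimating the factor $P_0$. From the partial data $S(M)$ alone there is no algebraic certificate that the maximal-rank subsets are interchangeable with respect to the \emph{unseen} tail of $S(N)$, so the uniform prior is a genuine modelling hypothesis rather than a provable fact; a rigorous treatment would require a distributional model for the iterative sequences $V(F,y,M)$ arising from realistic maps $F$. Likewise the probability $P_0$ that condition (\ref{MaxRankrelation}) holds---that a set achieving maximal rank at $S(M)$ also achieves it at $S(N)$---demands an independent analysis of rank saturation along the sequence, which I expect cannot be settled without the empirical case studies the paper defers to future work. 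Hence I would present the reduction of the first three steps as rigorous and explicitly flag the uniformity estimate and $P_0$ as the irreducibly heuristic core, consistent with the paper's own caveat that these conjectures await computational verification.
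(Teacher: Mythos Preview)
The statement under consideration is a \emph{conjecture}, and the paper does not supply a proof; its entire justification is the single paragraph immediately preceding the conjecture, which observes that each maximal-rank subset $\mathcal{M}_1\subset\mathcal{M}$ with $|\mathcal{M}_1|=r(\mathcal{M})$ determines a unique associated polynomial, and then declares the probability formula ``logical'' on that basis alone. There is therefore no proof in the paper against which to check your argument.

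Your proposal is considerably more structured than anything the paper offers. You make explicit several steps the paper leaves implicit or omits: the identification of the sample space with the family of maximal-rank subsets, the use of hypothesis (\ref{MaxRankrelation}) to argue that appending rows does not change the column space and hence that the true inverse lies among the candidates $\{s_{(-1)}^{\mathcal{K}}\}$, the invocation of the binary two-class partition from the discussion before Theorem~\ref{ThCommonInverse}, and the exchangeability heuristic that converts ``one correct subset among $n_{\mathcal{K}}$'' into the uniform probability $1/n_{\mathcal{K}}$. The paper contains none of this scaffolding; it simply asserts the formula. Your explicit separation of the rigorous reduction from the two genuinely heuristic ingredients---the uniformity prior over subsets and the unestimated factor $P_0$---is exactly right and matches the paper's own acknowledgement that ``this leaves open the question'' of $P_0$ and that the conjectures ``need to be verified by carrying out computations on case studies.''

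One small caution: your claim that the correct-inverse class must be a singleton is stronger than what the formula $1/n_{\mathcal{K}}$ strictly requires, and the paper does not commit to it; a cleaner reading is that the conjecture posits uniform selection together with an \emph{expected} correct-class size of one in the average case, not that the class is literally a singleton for every instance. Apart from that nuance, your treatment is a faithful and more careful elaboration of the paper's heuristic, and you correctly flag why the statement remains a conjecture.
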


 \subsection{Consequences of previous known estimates of nonlinear complexity}
 Non-linear complexity of sequences was proposed by Jansen \cite{JansenBoe,Jansen91} as the shortest order of feedback shift register (FSR) which can generate the sequence through the recurrence relations. Hence this complexity is the same as the shortest order $m$ of an associated polynomial of $S(M)$ considered in this paper. For invertibility of the sequence using the RRs defined by such a polynomial we also hypothesised maximal rank being achieved by the Hankel matrix of evaluation of the associated polynomial of order $m$ and degree $d$ at $S(M)$ and solvability of the invertibility relation (\ref{InverseRelationwithcoeff}). 
 
 One of the important result of Jansen whose proof has been revisited recently in \cite{Yuanetal} is that the nonlinear complexity of random sequences $S(N)$ on an average is of the order $m=2\log N$. Hence the expected order $m$ for inversion of $S(N)$ using a monomial set ${\cal M}$ is $2\log N$ and the average complexity of inversion using such an ${\cal M}$ is $(2\log N)r({\cal M})$ where the rank of the Hankel matrix $r({\cal M})=H({\cal M})(S(M))$ is of polynomial order $O((\log N)^k)$ when the number of monomials in ${\cal M}$ is of polynomial order.  Hence the above conjecture has the concrete description as follows. Note that the number of subsets of ${\cal M}$ giving maximal rank of the Hankel matrix over $S(M)$ is also of polynomial order $O((\log N)^k)$. 

 \begin{conjecture}
     The expected probability of correct inversion of $S(N)$ when the partially given sequence $S(M)$ satisfies the condition (\ref{MaxRankrelation}) for a monomial set ${\cal M}$ of order $m=2\log N$ and number of monomials sufficient to attain the maximal rank of the Hankel matrix over $S(M)$ is
     \[
     P_{\rm exp}=\frac{1}{\mbox{\rm Number of subsets ${\cal K}$ of ${\cal M}$ with maximal rank $r({\cal M})=|{\cal K}|$}}
     \]
     and the average case complexity of inversion is
     \[
     C_{\rm Inv}=O((\log N)^{k})
     \]
     However this leaves open the question of probability of the condition (\ref{MaxRankrelation}) being satisfied for a monomial set ${\cal M}$ which achieves maximal rank at $S(M)$. Let this be denoted as $P_0$. The final probability of correct inversion for any ${\cal M}$ which achieves maximal rank at $S(M)$ is then
     \[
     P_{\rm Inv}=P_{\rm exp}P_0
     \]
 \end{conjecture}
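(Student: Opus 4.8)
The plan is to assemble the conjecture from three ingredients already in place: Jansen's average-case bound on nonlinear complexity, the probability formula $P_{\cal M}$ of the preceding conjecture, and the definition $C({\cal M})=\sqrt{m\,r({\cal M})}$ of the complexity of inversion relative to a monomial set. The probability claim is the most direct, so I would establish it first: the expression for $P_{\rm exp}$ is nothing more than the specialization of $P_{\cal M}$ to a monomial set ${\cal M}$ whose order has been fixed at the average nonlinear complexity $m=2\log N$ reported in \cite{JansenBoe,Jansen91,Yuanetal}. Thus the first step is to invoke that average-case value of $m$ and substitute it into $P_{\cal M}$, observing that by Theorem \ref{Th:InversionoverM} every maximal-rank subset ${\cal K}\subset{\cal M}$ with $r({\cal M})=|{\cal K}|$ yields a \emph{unique} associated polynomial and hence a unique candidate inverse, so the denominator of $P_{\rm exp}$ counts exactly the equiprobable candidates among which the true inverse must sit.

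For the complexity estimate I would argue as follows. Once $m=2\log N$, the number of monomials available at a fixed degree $d$ is $n_C=\sum_{i=1}^{d}\binom{m}{i}=O(m^{d})=O((\log N)^{d})$, which is polynomial in $\log N$; this is consistent with the length bound (\ref{Boundonmd}), since a sub-exponential $M$ easily satisfies $M-m\geq n_C$. Because the rank of a Hankel matrix cannot exceed its number of columns, $r({\cal M})=\rank H({\cal M})(S(M))\leq n_C=O((\log N)^{d})$, so $r({\cal M})$ is itself of polynomial order in $\log N$. Feeding these two estimates into the complexity measure gives, whether one uses the definition $C({\cal M})=\sqrt{m\,r({\cal M})}$ or the product form $m\,r({\cal M})=(2\log N)\cdot O((\log N)^{d})$ appearing just before the conjecture, a quantity that is polynomial in $\log N$ in either case; on renaming the exponent this is the asserted $C_{\rm Inv}=O((\log N)^{k})$. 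The same polynomial bound on $n_C$ forces the count in the denominator of $P_{\rm exp}$ to be at most polynomial in $\log N$ under the hypothesis, stated just before the conjecture, that ${\cal M}$ is chosen so that the number of maximal-rank basis subsets stays of polynomial order, whence $P_{\rm exp}$ is at worst inverse-polynomial in $\log N$.

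The hard part, and the reason the statement must remain a conjecture, is the factor $P_0$, the probability that the maximal-rank condition (\ref{MaxRankrelation}) actually holds, that is, that the rank captured at the accessible length $M$ already equals the rank of $H({\cal M})$ over the full period $N$. This is precisely the assertion that the short partial sequence exposes every linearly independent recurrence of the exponentially long periodic sequence, and there is no combinatorial identity yielding it: it depends entirely on the distribution of Hankel matrices produced by the maps $F$, or the sequences, arising in practice. The approach I would take is heuristic, modelling the rows of $H({\cal M})(S(N))$ beyond the first $M-m$ as effectively random vectors and arguing that once $M-m$ exceeds $r({\cal M})$ by a comfortable margin these further rows lie in the already-spanned row space with high probability, so that $P_0\to 1$; but turning this into a theorem requires a probabilistic model for the sequence ensemble that is not available here. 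Consequently I would leave $P_0$ as an empirically measured quantity and present the final estimate $P_{\rm Inv}=P_{\rm exp}P_0$ as a conjecture to be validated by computation on realistic case studies, exactly as the paper proposes.
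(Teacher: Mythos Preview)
The statement in question is a \emph{conjecture}, not a theorem, and the paper offers no proof of it; what precedes it in Section~5.2 is exactly the kind of heuristic motivation you have reconstructed. Your derivation---invoking Jansen's average-case estimate $m=2\log N$, bounding $r({\cal M})$ by the polynomial-size monomial count $n_C=O((\log N)^d)$, feeding these into the complexity measure to obtain $C_{\rm Inv}=O((\log N)^k)$, specialising $P_{\cal M}$ of the preceding conjecture to get $P_{\rm exp}$, and isolating $P_0$ as the unresolved factor---tracks the paper's own informal reasoning essentially line for line, and you are right to conclude that the statement cannot be upgraded to a theorem without a probabilistic model for the sequence ensemble.
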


In realistic problems the index $k$ is likely to be small and the resulting process of computation of inverse is likely to be practically feasible. Evidence for the above conjectures needs to be gathered from case studies of sequences arising from various problems in Cryptography and Finite state systems. 

\subsubsection{Inversion using \PCId}
Some of the problems that can in principle be solved by inversion of sequences (as local inversion of maps) have been discussed in the previous articles \cite{Sule2,Sule3,Sule4}. It is worth recalling these challenge problems here. For instance local inversion of maps can in principle solve, RSA encryption without factoring the modulus, breaking RSA by finding equivalent private key without factoring the modulus, discrete log problems in finite fields and elliptic curves, inversion of One Way Functions, key recovery problems under known plaintext attack from encryption (for both block and stream ciphers), seed recovery from random number generators etc. In previous work mentioned above these problems were theoretically addressed by inversion using LC. However average case LC is exponential. Hence complexity of inversion using degree $1$ associated polynomials is exponential. 

On the other hand as shown in \cite{Yuanetal} the MOC is of polynomial order in the period of the original sequence on the average. Hence whenever the condition (\ref{MaxRankrelation}) holds for a sequence with small enough degree $d$, a monomial set of polynomial size can solve the inversion problem even for an exponentially long sequence with high probability close to $0.5$. We have made conjectures concerning the complexity and probability of correct inversion above based on this premise that the complexity of inversion of a sequence is of polynomial order for higher degree. Hence \PCId\ may resolve the challenge problems referred above in average cases in feasible time. This verification is however beyond the scope of the present paper.

\section{Conclusion}
A theory of complexity of inversion of sequences is developed using non-linear (polynomial) functions instead of the well known linear functions which refer to the linear complexity of sequences. Hence the complexity is defined as the minimal number of variables required in polynomials at a fixed degree which satisfy recurrence relations of the sequence called as associated polynomials. The complexity of inversion is referred to the number of variables at a fixed degree when a unique inverse can be solved from the associated polynomials. Conditions for existence of such polynomials and solvability of the inverse are described in terms of solutions of a linear system of equations and a matrix of these equations called as the Hankel matrix. The Hankel matrix is a matrix of evaluation of monomials in associated polynomials with largest degree equal to the a-priori fixed degree. This matrix is a generalisation of the Hankel matrix well known in linear complexity (LC) theory of sequences and refers to polynomial recurrence relations of a fixed degree. Inversion of the sequence in terms of LC is shown as a special case. The conditions of existence of associated polynomials having inverse is also extended to an a-priori chosen set of monomials which appear linearly in associated polynomials of the sequence as well for polynomials with special structure. In conclusion this paper develops a nonlinear generalisation of the LC based theory of complexity and inversion of sequences with a major advantage that while the average case of LC of inversion is exponential time, the average case of polynomial complexity of inversion at a fixed degree is conjectured to be polynomial time. In the final section it is conjectured that the average case polynomial complexity of inversion at a fixed degree of random sequences is of polynomial order and hence once the bound on the complexity is predicted, the computation of inversion of sequences is possible in polynomial time in the average case. This conjecture needs to be verified and established for case studies of sequences arising in realistic problems which is a massive computational task beyond the scope of the present paper. However, if the conjecture turns out to have positive empirical evidence the nonlinear complexity of inversion shall be a disruptive tool in solving many problems of computation which are perceived as difficult.   

\end{document}